\RequirePackage{amsmath}
\documentclass[runningheads]{llncs}
\usepackage{amssymb}
\usepackage{verbatim}
\usepackage{gastex}
\usepackage{algorithmicx,algorithm,algpseudocode}
\usepackage{pstricks,pstricks-add,pst-node}
\usepackage{geometry,array,graphicx,multicol}
\usepackage{xcolor}
\usepackage[lowtilde]{url}

\usepackage[utf8]{inputenc}
\usepackage[T1]{fontenc}
\DeclareSymbolFont{rsfscript}{OMS}{rsfs}{m}{n}
\DeclareSymbolFontAlphabet{\mathrsfs}{rsfscript}

\newcommand{\LineIf}[2]{\State \algorithmicif\ {#1}\ \algorithmicthen\ {#2} \algorithmicend\ \algorithmicif}
\newcommand{\Q}{\mathbb{Q}}


\begin{document}
\title{Generating Synchronizing Automata\\with Large Reset Lengths}
\author{Andrzej Kisielewicz\thanks{Supported in part by Polish MNiSZW grant IP 2012 052272.}
\and Marek Szyku{\l}a\thanks{Supported in part by Polish NCN grant 2013/09/N/ST6/01194.}}
\institute{Department of Mathematics and Computer Science, University of Wroc{\l}aw
\email{andrzej.kisielewicz@math.uni.wroc.pl,\ msz@ii.uni.wroc.pl}}
\maketitle

\begin{abstract}
We study synchronizing automata with the shortest reset words of relatively large length. First, we refine the Frankl-Pin result on the length of the shortest words of rank $m$, and the B\'eal, Berlinkov, Perrin, and Steinberg results on the length of the shortest reset words in one-cluster automata. The obtained results are useful in computation aimed in extending the class of small automata for which the \v{C}ern\'y conjecture is verified and discovering new automata with special properties regarding synchronization.
\end{abstract}


\section{Introduction}

We deal with deterministic finite automata $\mathcal{A} = \langle Q,\Sigma,\delta \rangle$, where $Q$ is the set of the states, $\Sigma$ is the input alphabet, and $\delta\colon \; Q \times \Sigma \to Q$ is the (complete) transition function. The cardinality $n=|Q|$ is the \emph{size} of $A$, and if $k=|\Sigma|$ then $\mathcal{A}$ is called $k$-\emph{ary}. The \emph{rank} of a word $w \in \Sigma^*$ is $|Qw|$, and the \emph{rank} of $\mathcal{A}$ is the minimal rank of a word over $\mathcal{A}$. For a nonempty subset $\Sigma'\subseteq \Sigma$, we may define the automaton $\mathcal{A}' = \langle Q,\Sigma',\delta' \rangle$, where $\delta'$ is the natural restriction of $\delta$ to $\Sigma'$. In such a case $\mathcal{A}$ is called an \emph{extension} of $\mathcal{A}'$. 
The automata of rank 1 are called \emph{synchronizing}, and each word $w$ with $|Qw|=1$ is called a \emph{synchronizing} (or \emph{reset}) word for $\mathcal{A}$.

We are interested in the length of a shortest reset word for $\mathcal{A}$ (there may be more than one word of the same shortest length). We call it the \emph{reset length} of $\mathcal{A}$. 
The famous \v{C}ern\'y conjecture states that every synchronizing automaton $\mathcal{A}$ with $n$ states has a reset word of length $\leq (n-1)^2$. This conjecture was formulated by \v{C}ern\'y in 1964 \cite{Cerny1964}, and is considered the longest-standing open problem in combinatorial theory of finite automata. So far, the conjecture has been proved only for a few special classes of automata and a cubic upper bound has been established (see Volkov \cite{Volkov2008Survey} for an excellent survey). In~\cite{KS2013GeneratingSmallAutomata} we have verified the conjecture for all binary automata with $n<12$ states.

In this paper we prove some new results improving known bounds and extending the class of automata for which the \v{C}ern\'y conjecture is verified. In particular, we strengthen the Frankl-Pin result on the length of the shortest words of rank $m$, and the B\'eal, Berlinkov, Perrin~\cite{BP2009QuadraticUpperBoundInOneCluster,BBP2011QuadraticUpperBoundInOneCluster}, and Steinberg~\cite{Steinberg2011OneClusterPrime} results on one-cluster automata. These are refinements of a rather technical nature. The motivation for these refinements is to make computations in this area more effective.

This allows to extend the studies reported in~\cite{Tr2006Trends,AGV2010,AGV2013,KS2013GeneratingSmallAutomata}.
In particular, we search for synchronizing automata with relatively large reset length. We improve the algorithm from \cite{KS2013GeneratingSmallAutomata} which takes a set of $(k-1)$-ary automata with $n$ states and generates all their nonisomorphic one-letter extensions. To perform an exhaustive search over the $k$-ary automata with $n$ states with some property, we need to progressively run the algorithm $k-1$ times starting from the complete set of non-isomorphic unary automata. However, in each run, if we know that any extension of an automaton $\mathcal{A}$ cannot have the desired property, we can safely drop $\mathcal{A}$ from further computations. Since the number of generated automata grows rapidly, suitable knowledge saves a lot of computational time and extends the class of the automata investigated. In this study, we concentrate on automata of arity $k>2$.


\section{Theoretical Base for Computation}\label{sec:bounds}

Through the paper, if not stated otherwise, $\mathcal{A}$ denotes a (deterministic, finite) automaton, $\Sigma$ its alphabet, $Q$ its set of the states, and $n=|Q|$ its size. Words are always the words over $\Sigma$ (that is, elements of $\Sigma^*$). A word $w$ is said to \emph{compresses} a set $M \subseteq Q$ if $|Mw| < |M|$. In such a case $M$ itself is called \emph{compressible}.

\subsection{Frankl-Pin sequences}

Suppose that $M \subseteq Q$ and $u=a_1 \dots a_\ell$ is a shortest word compressing $M$, that is, such that $|Mu| < |M|$. Let $M_i = M a_1 \dots a_{i-1}$ for $1 \le i \le \ell+1$. Then, there are $x_\ell,y_\ell \in M_\ell$ such that $x_\ell a_\ell = y_\ell a_\ell$. For $\ell > i \ge 1$, we define $x_i, y_i\in M_i$  by $x_ia_i = x_{i+1}$ and $y_ia_i = y_{i+1}$. This defines $x_i$ and $y_i$ uniquely, since otherwise $u$ would not be a shortest word compressing $M$. By the same reason the subsets $M_i$ are of the same cardinality for $i \le \ell$, and together with the pairs $R_i = \{x_i,y_i\}$ they satisfy the following conditions:

\begin{enumerate}
\item $R_i \subseteq M_i$ for $1 \le i \le \ell$;
\item $R_i \not\subseteq M_j$ for $1 \le j < i \le \ell$.
\end{enumerate}

In~\cite{Pin1983OnTwoCombinatorialProblems}, J.-E. Pin used this observation to bound the length of a word $u$ compressing $M$. He suggested a certain combinatorial estimation that was proved subsequently by Frankl. (We quote the result in a restricted form sufficient for our aims).

\begin{theorem}\label{thm:frankl}\textnormal{(P.~Frankl \cite{Fr1982})}
Let $Q$ be an $n$-element subset, $M_1,\ldots,M_\ell$ be a sequence of its $m$-subsets (for some $1 < m \le n$), and $R_1,\ldots,R_\ell$ be a sequence of pairs contained in $Q$. If the conditions~1 and 2 above are satisfied, then $$\ell \le \binom{n-m+2}{2}.$$
\end{theorem}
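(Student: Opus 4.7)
My plan is to proceed by induction on $d = n - m$. Before setting up the induction, I would record two easy distinctness facts that follow directly from the hypotheses. First, the $M_i$ are pairwise distinct: if $M_i = M_j$ with $j < i$, then $R_i \subseteq M_i = M_j$ contradicts condition~2. Second, the $R_i$ are pairwise distinct: if $R_i = R_{i'}$ with $i < i'$, then $R_{i'} = R_i \subseteq M_i$ again contradicts condition~2. These alone yield only $\ell \le \binom{n}{2}$, which is much weaker than Frankl's bound for $m$ close to $n$, so the pair structure must be leveraged more carefully.

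The base case $d = 0$ forces $m = n$, so the only $m$-subset of $Q$ is $Q$ itself; by the distinctness of the $M_i$, $\ell \le 1 = \binom{2}{2}$, as required.

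For the inductive step from $d-1$ to $d$, I would pick an element $a \in Q$ and partition the indices into $A = \{i : a \in M_i\}$ and $B = \{i : a \notin M_i\}$. The subsequence indexed by $B$ consists of $m$-subsets of $Q \setminus \{a\}$, a universe of size $n-1$, and conditions~1 and~2 are inherited by any subsequence. By the induction hypothesis applied with parameters $(n-1, m)$, one obtains $|B| \le \binom{(n-1)-m+2}{2} = \binom{d+1}{2}$. Since $\binom{d+2}{2} = \binom{d+1}{2} + (d+1)$, it remains to establish $|A| \le d + 1 = n - m + 1$, after which summing the two bounds closes the induction.

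The main obstacle is precisely this bound $|A| \le n-m+1$, which I expect to be the delicate step: it cannot come from distinctness of the $M_i$ alone (there are $\binom{n-1}{m-1}$ $m$-subsets containing $a$), so it must exploit the pairs $R_i$ together with condition~2 restricted to indices in $A$. I would try to choose $a$ carefully---e.g., an element lying in many $M_i$ or in $M_1$ but not $M_\ell$---so that for $i \in A$ the intersections $R_i \cap (Q \setminus M_j)$ for $j \in B$ supply enough "witness" elements to linearly order the indices of $A$ via a canonical parameter taking at most $d+1$ values. If this direct split does not yield the bound cleanly, I would fall back on Frankl's shifting strategy: repeatedly apply canonical compression operations to the pairs $(M_i, R_i)$ that preserve conditions~1 and~2 and the length $\ell$, bringing the configuration to an extremal canonical form whose size can be computed directly; here the hazard is the combinatorial bookkeeping needed to verify that each shift preserves both conditions simultaneously.
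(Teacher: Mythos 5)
The paper does not prove this theorem at all --- it is quoted verbatim from Frankl's 1982 paper and used as a black box --- so there is no internal proof to compare against; I can only assess your sketch on its own merits, and it has a genuine gap at exactly the point you flag as delicate. Your preliminary observations (pairwise distinctness of the $M_i$ and of the $R_i$), the base case $d=0$, and the bound $|B| \le \binom{n-m+1}{2}$ for the indices with $a \notin M_i$ are all correct. But the entire content of the theorem is then concentrated in the unproved claim that some $a \in Q$ satisfies $|A| = |\{i : a \in M_i\}| \le n-m+1$, and nothing in the proposal establishes it. Averaging cannot produce such an $a$: the average degree is $\ell m/n$, which for $\ell$ near $\binom{n-m+2}{2}$ vastly exceeds $n-m+1$ unless $m$ is very small. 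Indeed, in the standard extremal configuration ($T$ an $(n-m+2)$-subset of $Q$, $R_i$ ranging over all pairs of $T$, $M_i = R_i \cup (Q\setminus T)$) every element of $Q\setminus T$ lies in \emph{all} $\ell$ sets, so only the elements of $T$ are admissible choices of $a$, and identifying such elements in an arbitrary configuration is not addressed. The suggestion to "linearly order the indices of $A$ via a canonical parameter taking at most $d+1$ values" is a hope, not an argument.

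The fallback to a compression/shifting argument is also not a safe exit. Observe that the statement is exactly the skew Bollob\'as set-pair inequality with $A_i = R_i$ (the $2$-sets) and $B_i = Q \setminus M_i$ (the $(n-m)$-sets): condition~1 says $A_i \cap B_i = \emptyset$ and condition~2 says $A_i \cap B_j \neq \emptyset$ for $j < i$, whence $\ell \le \binom{2+(n-m)}{2}$. Unlike the symmetric Bollob\'as inequality, which has an elementary random-permutation proof, all known proofs of this skew version (Frankl's own, and Lov\'asz's via exterior algebra) are algebraic; no purely combinatorial induction or compression proof is known, and the asymmetry introduced by the ordering of indices is precisely what breaks the naive inductive splits. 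So the proposal, as written, reduces the theorem to a lemma that is both unproven and unlikely to be provable by the means sketched.
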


We say that a sequence $(M_i,R_i)$, $(1 \le i \le \ell)$ of $m$-subsets $M_i$ and pairs $R_i$ satisfying conditions~1 and~2
is an \emph{$m$-subset Frankl-Pin sequence}. If all the pairs $R_i$ belong to a set $P$ of pairs, we will say that this sequence is \emph{over} $P$. From what we said it follows that a shortest word compressing $M$ cannot be longer than the length of the Frankl-Pin sequence starting from $M$. 
Hence summing up the binomial coefficient in Theorem~\ref{thm:frankl} we obtain, in particular, the bound $(n^3-n)/6$ for the length of a shortest reset word. In spite of many efforts to improve it, it is still the best bound known in the literature.  

In order to make a slight technical improvement, we introduce the following notions. Let $P$ be an arbitrary set of compressible pairs in $\mathcal{A}$. By a \emph{synchronizing height} $h(P)$ of $P$ we mean the minimal $h$ such that for each pair $\{x,y\}\in P$ there exists a word $w$ of length $h$ such that $xw=yw$. We make use of the observation that if the synchronizing height is smaller than the maximal length of a Frankl-Pin sequence over $P$, then we can improve the Pin's estimation from \cite{Pin1983OnTwoCombinatorialProblems}.

\begin{theorem}\label{thm:pairs_bound}
Let $P$ be a set of compressible pairs in $\mathcal{A}$, $h(P)$ the synchronizing height of $P$, and $p(P,m)$ the maximal length of an $m$-subset Frankl-Pin sequence over $P$. 
Then, for every compressible $m$-subset $M$ of $Q$ ($2 \le m \le n$), there is a word compressing $M$ whose length does not exceed 
$$\binom{n-m+2}{2} - p(P,m) + h(P).$$
\end{theorem}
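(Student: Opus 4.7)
My plan is to show that any shortest word $u = a_1 \cdots a_\ell$ compressing $M$ already satisfies $\ell \le \binom{n-m+2}{2} - p(P) + h(P)$; the word $u$ is then the required witness. I will analyze $u$ via its associated Frankl-Pin sequence $(M_i, R_i)_{i=1}^{\ell}$, constructed exactly as in the paragraph preceding Theorem~\ref{thm:frankl}, and ultimately invoke that theorem on a Frankl-Pin sequence extended on the right. In the edge case $\ell \le h(P)$, the desired bound follows immediately from $p(P) \le \binom{n-m+2}{2}$ (Frankl), so I may assume $\ell > h(P)$.

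The key ingredient is a \emph{shortcut lemma}: for every $k$ with $1 \le k \le \ell - h(P)$, no pair of $P$ is contained in $M_k$. For if $\{x, y\} \in P$ were a subset of $M_k$, then by the definition of $h(P)$ there would be a word $w$ of length at most $h(P)$ with $xw = yw$; the word $a_1 \cdots a_{k-1} w$ would then compress $M$ and have length at most $k - 1 + h(P) \le \ell - 1$, contradicting the minimality of $u$.

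With this in hand, I take a Frankl-Pin sequence $(N_1, S_1), \ldots, (N_{p(P)}, S_{p(P)})$ over $P$ of maximum length $p(P)$ (existing by the definition of $p(P)$) and concatenate it onto the prefix $(M_1, R_1), \ldots, (M_{\ell - h(P)}, R_{\ell - h(P)})$ of the $u$-sequence. Conditions~1 and~2 hold inside each half by construction; the only thing to verify across the join is that each appended pair $S_t$ is not a subset of any $M_k$ with $k \le \ell - h(P)$, and this is exactly what the shortcut lemma forbids (because $S_t \in P$). Applying Theorem~\ref{thm:frankl} to the resulting Frankl-Pin sequence, of length $(\ell - h(P)) + p(P)$, yields $\ell - h(P) + p(P) \le \binom{n-m+2}{2}$, which rearranges to the claimed bound. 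The main technical obstacle is the cross-join verification, which the shortcut lemma has been engineered to settle; the rest is routine Frankl-Pin bookkeeping.
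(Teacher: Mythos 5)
Your proof is correct, and its engine is the same as the paper's: use minimality of a word to rule out pairs of $P$ along its trajectory, concatenate the induced Frankl-Pin sequence with a maximal one over $P$, and apply Theorem~\ref{thm:frankl} to the combined sequence. Where you differ is in how the $h(P)$ term is accounted for. The paper redefines $u$ to be the shortest word that either compresses $M$ \emph{or} maps $M$ onto a set containing a pair of $P$; minimality then makes the entire trajectory $M_1,\dots,M_\ell$ free of $P$-pairs, Frankl gives $|u|\le\binom{n-m+2}{2}-p(P)$, and in the second case a suffix of length at most $h(P)$ is appended to finish the compression. You instead keep $u$ as the genuine shortest compressing word; your shortcut lemma shows that only the first $\ell-h(P)$ sets are guaranteed $P$-pair-free, so you truncate the sequence there before appending, paying the $h(P)$ up front rather than at the end. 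The two bookkeepings are equivalent, and the cross-join verification you isolate is exactly the point that needs care in either version. Your variant has the mild advantage that the witness is the shortest compressing word itself, and your degenerate case ($\ell\le h(P)$) plays the role of the paper's $|u|=0$ case, dispatched the same way via $p(P)\le\binom{n-m+2}{2}$. (Both versions share the paper's implicit convention that the maximal Frankl-Pin sequence over $P$ being appended consists of $m$-subsets for the same $m$, as needed for Theorem~\ref{thm:frankl}.)
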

\begin{proof}
(In the first part of the proof we modify the Pin's argument mentioned above; see \cite[Proposition~3.1]{Pin1983OnTwoCombinatorialProblems}).
Let $u=a_1 \dots a_\ell$ be a shortest word such that either $|Mu| < |M|$ or $\{x_k,y_k\} \subseteq Mu$ for some $\{x_k,y_k\} \in P$. First observe, that if $|u|=0$, then it means that $M$ contains a pair from $P$, and consequently, there is a word $w$ compressing $M$ of length $|w| \le h(P)$. Since, by Theorem~\ref{thm:frankl}, $\binom{n-m+2}{2} \ge p(P,m)$, $w$ has the required length. 

Thus, we may assume that $|u| \ge 1$. 
Let $M_i = M a_1 \dots a_{i-1}$ for $1 \le i \le \ell+1$. 
Since $|M_{\ell+1}| < |M_\ell|$ or there are $\{x,y\} \in M_{\ell+1}$ with $\{x,y\} \in P$, we have that $M_\ell$ contains two distinct states $x_\ell,y_\ell$ such that either $x_\ell a_\ell = y_\ell a_\ell$, or $x_\ell a_\ell = x$ and $y_\ell a_\ell = y$. For $\ell > i \ge 1$
we define $R_i = \{x_i,y_i\} \subseteq M_i$ by $x_i a_i = x_{i+1}$ and $y_i a_i = y_{i+1}$.  
The sequence $(M_i,R_i)$ is a Frankl-Pin sequence. Indeed, condition~1 holds by definition. For condition~2, assume that $R_i \in M_j$ for some $1 \le j < i \le \ell$. Then, for word $u'=a_1 \dots a_{j-1} a_i \dots a_\ell$ we have a pair of distinct states $x',y' \in M$ such that either $x'u'=y'u'$ or $\{x'u',y'u'\} \in P$, and $u'$ is shorter than $u$, which is a contradiction.

We extend the sequence $(M_i,R_i)$ as follows. Let $(T_i,P_i)$ be an $m$-subset Frankl-Pin sequence over $P$ with $1\le i \le p(P,m)$. Define
\begin{itemize}
\item $M'_i = M_i$ and $R'_i = R_i$ for $1 \le i \le \ell$;
\item $M'_i = T_{i-\ell}$ and $R'_i = P_{i-\ell}$ for $\ell+1 \le i \le \ell+p(P,m)$.
\end{itemize}
Then $(M'_i,R'_i)$ is a Frankl-Pin sequence of length $\ell + p(P,m)$. Indeed, 
condition~1 trivially holds. For condition~2 it is enough to observe that, by the definitions of $M_i$ and $(T_i,P_i)$, for $i > \ell$, $R'_i = P_{i-\ell}$ is not in any $M'_j$ with $j < i$.

Now, by Theorem~\ref{thm:frankl}, $\ell+p(P,m) \le \binom{n-m+2}{2}$. Thus $|u| \le \binom{n-m+2}{2} - p(P,m)$. If $|Mu| < |M|$ we are done. Otherwise, $\{x',y'\} \in Mu$ for some $\{x',y'\} \in P$ and we must append to $u$ a word compressing $\{x',y'\}$, which has length at most $h(P)$. As a result we obtain a word $w$ of length at most $\binom{n-m+2}{2} - p(P,m) + h(P)$, as required.
\qed
\end{proof}

Our result is to be applied in concrete situations, when we can find a relatively large set of compressible pairs with small synchronizing height. In order to estimate the minimal length of Frankl-Pin sequence we have the following auxiliary result.  
Given words $w_1, \ldots, w_k$ we define $P(w_1, \ldots, w_k)$ to be the set $P$ of pairs $(x,y)$ such that $xw_i = yw_i$ for some $1 \le i \le k$. Given $k$, choose words $w_1,\ldots, w_k$ so that $P(w_1, \ldots, w_k)$ is of maximal cardinality. Denote this cardinality by $p(k)$.

\begin{proposition}\label{bounds-pro:len_sequence}
If $\mathcal{A} = \langle Q,\Sigma,\delta \rangle$ is an $n$-state automaton of rank $r$, 
then  
for each $2 \le m \le r$ there exists a Frankl-Pin sequence of $m$-subsets of length $p = p(\lfloor (r-m+3)/2 \rfloor)$.
\end{proposition}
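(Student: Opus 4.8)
The plan is to build a Frankl-Pin sequence in which the pairs $R_i$ run over \emph{all} elements of a maximal-size set $P = P(w_1,\dots,w_k)$, each equipped with an $m$-subset $M_i$ containing no element of $P$ other than $R_i$ itself; then any enumeration of $P$ automatically satisfies conditions~1 and~2, and the sequence has length $|P| = p(k)$.

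First I would fix the ingredients. Put $k = \lfloor (r-m+3)/2 \rfloor$; since $m \le r$ we have $k \ge 1$, so by the definition of $p(k)$ there are words $w_1, \dots, w_k$ with $P := P(w_1, \dots, w_k)$ of cardinality $p(k)$. Fix also a word $v$ with $|Qv| = r$ and set $N_0 = Qv$. Because $r$ is the rank, $|N_0 u| = |Q(vu)| \ge r = |N_0|$ for every word $u$, so every word acts injectively on $N_0$. Let $H$ be the graph on $Q$ whose edges are the pairs of $P$; then $N_0$ is an independent set of $H$, since two distinct states of $N_0$ are never merged by any $w_i$.

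The heart of the matter is a counting bound. Fix a pair $p = \{x,y\} \in P$ and an index $j$ with $xw_j = yw_j$. For each $i$ the set $\{a \in N_0 : aw_i = xw_i\}$ has at most one element (injectivity of $w_i$ on $N_0$), and likewise with $y$ in place of $x$; hence the closed $H$-neighbourhoods satisfy $|N_0 \cap N_H[x]| \le k$ and $|N_0 \cap N_H[y]| \le k$. The key observation is that for the index $i = j$ these two one-element sets coincide, because $xw_j = yw_j$; so $i = j$ contributes at most one vertex to $N_0 \cap (N_H[x] \cup N_H[y])$, while each $i \ne j$ contributes at most two, giving
$$\bigl| N_0 \cap (N_H[x] \cup N_H[y]) \bigr| \le 1 + 2(k-1) = 2k - 1 \le r - m + 2,$$
the last step using $2k \le r - m + 3$. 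Therefore $N_0 \setminus (N_H[x] \cup N_H[y])$ has at least $r - (2k-1) \ge m - 2$ elements. I would then, for every $p = \{x,y\} \in P$, pick $I_p \subseteq N_0 \setminus (N_H[x] \cup N_H[y])$ with $|I_p| = m-2$ and set $M(p) = \{x,y\} \cup I_p$, an $m$-subset of $Q$. Its only pair belonging to $P$ is $p$ itself: a pair $\{x,a\}$ or $\{y,a\}$ with $a \in I_p$ is excluded since $a \notin N_H[x] \cup N_H[y]$, and a pair $\{a,a'\}$ with distinct $a,a' \in I_p \subseteq N_0$ is excluded since $N_0$ is $H$-independent. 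Finally, enumerate $P$ as $R_1, \dots, R_{p(k)}$ in any order and put $M_i = M(R_i)$: condition~1 holds since $R_i \subseteq M(R_i)$, and condition~2 holds since for every $i' \ne i$ (in particular $i' < i$) the unique pair of $P$ inside $M_{i'}$ is $R_{i'} \ne R_i$. This is a Frankl-Pin sequence of $m$-subsets of length $p(k)$.

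The delicate point, and the only place where the exact value $k = \lfloor (r-m+3)/2 \rfloor$ is used, is the estimate $|N_0 \cap (N_H[x] \cup N_H[y])| \le 2k - 1$: the obvious bound gives only $2k$, which falls short by one precisely when $r - m$ is odd, and the missing unit is recovered by noting that any word collapsing $\{x,y\}$ sends $x$ and $y$ to the same state and so picks out the same vertex of $N_0$ as a common $H$-neighbour of both.
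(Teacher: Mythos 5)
Your proof is correct and takes essentially the same approach as the paper's: your $N_0$ is the paper's base set $S = Qw$, your observation that every word acts injectively on $N_0$ (so $N_0$ is $H$-independent) is the paper's remark that $S$ is not compressible, and your set $N_0 \cap (N_H[x]\cup N_H[y])$ is exactly the paper's $T_i$, bounded by $2k-1$ via the same key point that the index $j$ witnessing $xw_j = yw_j$ contributes only one element. The graph-theoretic packaging and the closing remark about why the floor value is exactly right are presentational, not mathematical, differences.
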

\begin{proof}
Fix $m \le r$, and let $c=\lfloor (r-m+3)/2 \rfloor$. Choose $w_1,\ldots,w_c$ so that
$P=P(w_1, \ldots, w_c)$ has cardinality $p=p(c)$. Fix some ordering of pairs in $P$ denoting
$P=(P_i)$, ${1 \le i \le p}$, $P_i = \{x_i,y_i\}$. We proceed to define corresponding $(M_i)$. To this end, let $w$ be a word of rank $r$, and let $S = Qw$. Then $S$ is not compressible, and in particular,  no pair $(x,y)\in P$ is contained in $S$.

Given a pair $P_i = \{x_i,y_i\}$, let $T_i$ consists of all elements $x\in S$ such that $xw_k = x_iw_k$ or $xw_k = y_iw_k$ for some $1\le k\le c$. Note that, since $S$ is not compressible,  for every $k$, there is at most one $x$ such that $xw_k = x_iw_k$. Similarly, there is at most one $x$ such that $xw_k = y_iw_k$. Moreover, since $x_iw_k = y_iw_k$ for some $k$, the cardinality $|T_i| \le 2c-1$, and consequently the $|S\setminus T_i| \ge r-2c+1 \ge m-2$.
 
We define $M_i = S'_i \cup  \{x_i,y_i\}$, where $S'_i $ is an arbitrary $(m-2)$-subset of $S\setminus T_i$. 
Now, consider a pair $P_j =\{x_j,y_j\}$ with $j \neq i$. 
Since $S'_i$ is not compressible, $P_j \not\subseteq S'_i$. Since $P_j\neq P_i$, the remaining case for $P_j \subseteq M_i$ is when  $|P_j \cap S'_i| = 1$ and one of $x_i$ or $y_i$ belongs to $P_j$. Then, take $k$ with $x_jw_k = y_jw_k$. It follows that there is $x\in S'_i$ such that either $xw_k = x_iw_k$ or $xw_k = y_iw_k$, which contradicts the fact that  $S'_i \subseteq S\setminus T_i$. 
Consequently, $P_j \not\subseteq M_i$. This proves that $(P_i,M_i)$ is a Frankl-Pin sequence over $P$ of required length.
\qed
\end{proof}

In the case of $|\Sigma|=1$ we have the following more specific result, which may be used to estimate the reset length, when one letter of the automaton is known.

\begin{corollary}\label{bounds-cor:1_letter_sequence}
If $\mathcal{A}$ is a unary automaton of rank $r$, and $P$ is the set of all compressible pairs in $\mathcal{A}$, then for each $2 \le m \le r$ there exists an $m$-subsets Frankl-Pin sequence over $P$ of length $p(P,m) = |P|$. Moreover, $|P| \ge \frac{1}{2} n(\frac{n}{r} - 1)$, and $h(P) = n-r$.
\end{corollary}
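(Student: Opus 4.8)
The plan is to exploit the very rigid structure of a unary automaton. Since $|\Sigma| = 1$, write $a$ for the unique letter; the functional graph of $a$ on $Q$ is a disjoint union of ``rho-shaped'' components, each consisting of a cycle with some trees hanging off it. The rank $r$ is then the total number of states lying on cycles (the eventual image $Qa^t$ for large $t$ stabilizes exactly on the union of the cycles). A pair $\{x,y\}$ is compressible if and only if $xa^t = ya^t$ for some $t$, i.e. $x$ and $y$ eventually merge; equivalently, they lie in the same component and are not both on the cycle in incompatible cycle-positions (two distinct states on the same cycle never merge). Since here $P$ is the set of \emph{all} compressible pairs and there is only one letter, $P(a, a^2, \dots)$ already captures everything, so $p(P) = |P|$ will follow once we show every compressible pair actually occurs in a Frankl-Pin sequence of $m$-subsets; this is exactly what Proposition~\ref{bounds-pro:len_sequence} gives (with $c = \lfloor (r-m+3)/2\rfloor$, but for a unary automaton the construction there in fact realizes \emph{all} of $P$, not merely $p(c)$ of the pairs, because the single letter forces $P(w_1,\dots,w_c) = P$ for any nonempty choice), yielding a sequence of length $|P|$.

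First I would nail down $h(P) = n - r$. For a compressible pair $\{x,y\}$, the time to merge is at most the ``tail depth'' of the states, i.e. the number of steps to reach the cycle; the worst case is a single tail of length $n - r$ attached to a small cycle, giving merge time $n-r$, and no pair needs more because once both coordinates are on the cycle and they have already been forced equal they stay equal. Conversely there genuinely is a pair requiring $n - r$ steps (the two states at the top of the longest tail and an appropriate cycle predecessor), so $h(P) = n-r$ exactly. The one subtlety to check is that a word of length exactly $n-r$ suffices \emph{simultaneously} in the sense required by the definition of $h(P)$, which only asks for a word of that length per pair, so this is immediate.

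Next, the lower bound $|P| \ge \tfrac12 n(\tfrac{n}{r} - 1)$. Here I would decompose $Q$ into the $a$-components and, within each component, observe that all states in it eventually funnel into its cycle, hence any two states in the same component that merge do so, and in fact \emph{every} pair of states in a component merges except pairs of distinct states both lying on that component's cycle. If the $j$-th component has $n_j$ states and cycle length $c_j$, it contributes $\binom{n_j}{2} - \binom{c_j}{2}$ compressible pairs; but rather than track cycles, a cleaner bound: each component contributes at least $\binom{n_j}{2} - \binom{n_j}{2}$... — no. The clean route is: for each state $x$ not on any cycle, $x$ merges with its image $xa$, and more usefully each non-cycle state merges with \emph{some} cycle state, and pairs (state, its eventual cycle-entry point) are all distinct, giving already $n - r$ pairs — too weak. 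So instead I would use convexity: $\sum_j n_j = n$ and $\sum_j c_j = r$, and $|P| = \sum_j \left(\binom{n_j}{2} - \binom{c_j}{2}\right)$; minimizing this under those constraints (the function $\binom{n_j}{2}-\binom{c_j}{2}$ is Schur-convex-ish in the partition) forces all components equal, $n_j = n/r'$-type, and a direct computation with one ``average'' component of $n/t$ states on a cycle of length... gives exactly $\tfrac12 n(\tfrac n r - 1)$ in the extremal balanced case.

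The main obstacle, and the step I would spend the most care on, is this last optimization: showing $\sum_j \big(\binom{n_j}{2} - \binom{c_j}{2}\big)$ is at least $\tfrac12 n(\tfrac n r - 1)$ given $\sum n_j = n$, $\sum c_j = r$, $1 \le c_j \le n_j$. I expect the extremal configuration to be a single component ($t=1$) — or all components identical — and the inequality to reduce, after clearing the binomials, to $\sum n_j^2 - \sum c_j^2 \ge n^2/r - n$, which by Cauchy–Schwarz gives $\sum c_j^2 \ge r^2/(\#\text{components})$ in the wrong direction, so the honest argument must instead bound $\sum n_j^2 \ge n^2/(\#\text{components})$ and $\sum c_j^2 \le (\max c_j)\sum c_j$; balancing these carefully against the count of components is the delicate bookkeeping. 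Everything else — the graph-theoretic description of unary automata, the identification of $P$, and the appeal to Proposition~\ref{bounds-pro:len_sequence} for $p(P)=|P|$ — is routine.
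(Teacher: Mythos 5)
There is a genuine error in your characterization of compressible pairs, and it derails the count of $|P|$. You claim that within a single rho-shaped component, \emph{every} pair of states merges except pairs of distinct states both on the cycle, giving $\binom{n_j}{2}-\binom{c_j}{2}$ compressible pairs per component. That is false: two states in the same component that funnel into \emph{different} cycle positions never merge. Concretely, take $Q=\{1,2,3,4\}$ with $1a=3$, $2a=3$, $3a=4$, $4a=3$ (one component, cycle $\{3,4\}$, $n=4$, $r=2$). Then $\{1,3\}$ maps under $a$ to $\{3,4\}$, which cycles forever, so $\{1,3\}$ is not compressible, and $|P|=\{\{1,2\},\{1,4\},\{2,4\}\}$ has size $3$, not $\binom{4}{2}-\binom{2}{2}=5$. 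Your subsequent convexity optimization is therefore carried out on the wrong quantity, and the bookkeeping you flagged as "delicate" does not recover. The paper's proof avoids this entirely by partitioning $Q$ into the $r$ equivalence classes of the relation $q\sim p \iff qa^{n-r}=pa^{n-r}$ (one class per cycle state, sizes $c_1,\dots,c_r$ summing to $n$); then $P$ is exactly the union of within-class pairs, $|P|=\sum_i\binom{c_i}{2}$, and Jensen/convexity with $r$ parts of total size $n$ immediately gives $|P|\ge r\cdot\frac{1}{2}\frac{n}{r}(\frac{n}{r}-1)=\frac{1}{2}n(\frac{n}{r}-1)$. This is both simpler and correct, and is the key idea you are missing.

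Two smaller points. First, your parenthetical claim that "$P(w_1,\dots,w_c)=P$ for any nonempty choice" is also wrong: $P(a)$ is only the set of pairs collapsed in one step. What is true (and what the paper uses) is that one particular choice, $w_1=a^{n-r}$, already gives $P(a^{n-r})=P$, and since $p(k)$ is a maximum over choices this suffices to conclude $p(k)=|P|$ for all $k\ge 1$ and then invoke Proposition~\ref{bounds-pro:len_sequence}. Second, your argument that some pair genuinely requires $n-r$ steps (to get $h(P)\ge n-r$) presupposes a single chain of non-cycle states; the example above has $h(P)=1<n-r=2$, so that direction fails in general. Only $h(P)\le n-r$ is obvious (from $a^{n-r}$ compressing every compressible pair), which is also all that the application in Theorem~\ref{thm:pairs_bound} actually needs.
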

\begin{proof}
We have $\Sigma =\{a\}$ in this case, and each word is of the form $w=a^h$. It follows that for each set $w_1,\ldots,w_k$, there is $h \le n-r$ such that  $P(w_1,\ldots,w_k) = P(a^h)$. Note that $a^{n-r}$ compresses each compressible pair. 
Consequently, taking $w_1=a^{n-r}$, we see that for every $k$, $p(k)$ is the cardinality of the set $P$ of all compressible pairs, and the first part of the result follows from Proposition~\ref{bounds-pro:len_sequence}. The set $P$, in this case, is the union of equivalence classes of the relation determined by the letter $a$ with the condition: $q, p \in Q$ are in the relation if and only if $qa^{n-r} = pa^{n-r}$. If $c_1,\ldots,c_r$ are the cardinalities of the equivalence classes, then $|P| \ge \sum_{i=1}^r \frac{c_i(c_i-1)}{2}$. The expression achieves its minimum when all $c_i$ are equal. Hence, $|P| \ge r\frac{n}{2r}(\frac{n}{r} - 1)$, as claimed. Obviously, $h(P)$ is given by the fact that $a^{n-r}$ compresses each compressible pair. 
\qed
\end{proof}

Of course, the bound above is the worst case. Having a concrete letter $a$ one may compute the exact value of $|P|$, which in particular cases may be as large as $\binom{n-r+1}{2}$.

Theorem~\ref{thm:pairs_bound} should be combined and compared with the following result proved by J.-E. Pin:

\begin{theorem}\label{thm:pin_rank_bound}\textnormal{(J.-E. Pin \cite{Pin1972Utilisation})}
Let $\mathcal{A} = \langle Q,\Sigma,\delta \rangle$ be an automaton with $n=|Q|$ states, and $u \in \Sigma^*$ be a word of rank $m$. If there is a word of rank $\le m-1$, then there is such a word of length at most $2|u| + n - m + 1$.
\end{theorem}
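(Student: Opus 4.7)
The plan is to write the target word as $u w$, where $w$ is a short word compressing $S := Qu$. Then $|Quw| = |Sw| < m$ makes $uw$ of rank $\le m-1$, and a bound $|w| \le |u| + n - m + 1$ immediately yields $|uw| \le 2|u| + n - m + 1$. Note that $S$ is compressible: by hypothesis some word $v$ satisfies $|Qv| \le m-1$, so $|Sv| \le |Qv| < |S|$, and a shortest $w$ with $|Sw| < m$ exists; the task is to bound its length.

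A natural attempt exploits the preimage structure of $u$. The map $u \colon Q \to S$ partitions $Q$ into $m$ classes $u^{-1}(s)$ of total excess $\sum_{s \in S} (|u^{-1}(s)| - 1) = n - m$; every pair within a class is merged by $u$ itself. Writing $P_u$ for this set of \emph{preimage pairs}, we have $|P_u| \ge n - m$ and synchronizing height $h(P_u) \le |u|$. Applying Theorem~\ref{thm:pairs_bound} to $S$ and $P_u$ then yields $|w| \le \binom{n-m+2}{2} - p(P_u) + |u|$.

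The obstacle is that this route only saves linearly off the Frankl-Pin quadratic, since $P_u$ merely guarantees $p(P_u) \ge n - m$. Reaching the sharp linear bound $|u| + n - m + 1$ seems to require a different tactic, for instance an orbit analysis of $S, Su, Su^2, \ldots$: if some iterate $Qu^i$ has rank $< m$ one wins cheaply (already $|u^2| = 2|u| \le 2|u| + n - m + 1$), and otherwise $u$ acts injectively on every iterate, and one may apply a pigeonhole or a direct shortest-path argument in the subset automaton restricted to the $m$-subsets reachable from $S$, whose diameter one hopes to control by $n - m + 1$ modulo the amortized cost $|u|$ of each $u$-step. The delicate part is arranging that the $|u|$ and $n - m + 1$ contributions \emph{add} rather than multiply; alternatively, one may try an induction on $n - m$, with an inductive hypothesis strong enough to carry the dependence on $|u|$ through the step.
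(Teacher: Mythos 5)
The paper offers no proof of this statement at all---it is quoted as a known result of Pin---so there is nothing in-paper to compare against, and your proposal must stand on its own. It does not: by your own admission it stops short of the claimed bound, and the two missing ingredients are concrete. First, the decomposition $uw$ is the wrong shape; the classical argument produces a word of the form $uzu$ with $|z|\le n-m+1$, which is where the $2|u|$ comes from. The second application of $u$ is essential: with $S=Qu$ and the partition of $Q$ into the $m$ preimage classes $su^{-1}$ ($s\in S$), the condition ``$uzu$ has rank $<m$'' becomes ``$Sz$ is not a transversal of that partition'', which is equivalent to $[S]\pi(z)\delta_s\neq 0$ for some $s\in S$, where $\delta_s=[su^{-1}]^T-\frac{1}{m}[Q]^T$ (indeed $[S]\pi(z)\delta_s=|S\cap s(zu)^{-1}|-1$, and these $m$ nonnegative integers sum to $m$, so one is nonzero iff one is zero iff $|Szu|<m$). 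Second, the bound on $|z|$ comes not from Frankl--Pin sequences but from the ascending-chain dimension count: the subspaces $W_k=\mbox{\rm Span}\{\pi(z)\delta_s : |z|\le k,\ s\in S\}$ strictly increase until one escapes $[S]^\perp$; the $\delta_s$ have pairwise disjoint supports up to the $[Q]^T$ correction and sum to $0$, so $\dim W_0=m-1$, while $\dim [S]^\perp=n-1$, forcing the first escape at some $k\le(n-1)-(m-1)+1=n-m+1$. The hypothesis that some word has rank $\le m-1$ guarantees an escaping word exists.

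Your first route is structurally incapable of reaching the linear bound: Theorem~\ref{thm:pairs_bound} with $P=P_u$ gives $\binom{n-m+2}{2}-p(P_u)+|u|$, and $p(P_u)$ is only guaranteed to be linear in $n-m$, so the estimate stays quadratic, as you note. The two alternatives you then sketch (orbits of $Qu^i$ with pigeonhole, or induction on $n-m$) are not carried out, and neither supplies the missing mechanism: there is no purely combinatorial reason why the relevant distance in the subset automaton should be $n-m+1$---that control is exactly what the linear-algebraic argument provides.
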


In our computation, given an automaton $\mathcal{A}$ of rank $m$, we can bound the reset length of a synchronizing extension by applying successively $m-1$ times either Theorem~\ref{thm:pairs_bound} or Theorem~\ref{thm:pin_rank_bound}, depending on which bound is smaller for the given rank. It can be demonstrated that the best results are achieved when each time we take a word of the minimal rank $m$, and check which of the propositions gives a better bound. It turned out that first, for larger ranks, Theorem~\ref{thm:pin_rank_bound} gives a better bound, and then Theorem~\ref{thm:pairs_bound} becomes more effective. Also note that, when applied to get a bound on the length of a shortest reset word, the Pin's result gives an exponential estimate, while our result gives a polynomial bound.


\subsection{One-cluster automata}

A very useful result from the computational point of view is the result of Steinberg~\cite{Steinberg2011OneClusterPrime} on one-cluster automata. 
Recall that an automaton $\mathcal{A} = \langle Q,\Sigma,\delta \rangle$ is \emph{one-cluster}, if it has a letter $a\in\Sigma$ such that for every pair $q,s \in Q$ there are $i,j \ge 1$ such that $qa^i = sa^j$. This means that the graph of the transformation induced by $a$ is connected. In particular, it has a unique cycle $C \subseteq Q$ with the property $Ca^i = C$ for every $i\ge 0$, and there is $\ell\ge 0$ such that $Qa^\ell = C$. The least such $\ell$ is called the \emph{level}\index{level} of $\mathcal{A}$. 
Steinberg \cite{Steinberg2011OneClusterPrime} proved that if the length $m$ of the cycle is prime, then the one-cluster automaton $\mathcal{A}$ has a reset word of length at most
\begin{equation}\label{bounds-eq:prime}
n - m + 1 + 2\ell + (m - 2)(n + \ell).
\end{equation}
We generalize this result to arbitrary lengths.

There is a series of results establishing a general quadratic upper bound for the reset length of one-cluster automata 
\cite{BP2009QuadraticUpperBoundInOneCluster,BBP2011QuadraticUpperBoundInOneCluster,Steinberg2011OneClusterPrime,Steinberg2011AveragingTrick,CarpiDAlessandro2013IndependendSetsOfWords}. For small lengths $m$ of the cycle, the best bound
\begin{equation}\label{bounds-eq:Stainberg}
2nm-3n-4m+2\ell+8
\end{equation}
was announced in Steinberg~\cite{Steinberg2011OneClusterPrime} (with a sketch of the proof). For larger $m$, the best bound 
\begin{equation}\label{bounds-eq:CarpiD'Alessandro}
2nm-2m\ln\frac{m+1}{2} - n - m
\end{equation}
has been obtained recently by Carpi and D'Alessandro~\cite{CarpiDAlessandro2013IndependendSetsOfWords}.
In the proof of (\ref{bounds-eq:CarpiD'Alessandro}) there is a more complicated formula that improves (\ref{bounds-eq:Stainberg}) for all $m$. Our generalization improves all these bounds. Note also that, in contrast with our result, the mentioned bounds for $m$ prime are weaker than the Steinberg's bound (\ref{bounds-eq:prime}).

From the computational point of view it is good to have a bound involving all possible parameters $n,m,\ell$. Our idea is to refine the proof of Steinberg \cite{Steinberg2011OneClusterPrime} by making two essential modifications. First, we formulate and prove a different crucial lemma that does not employ the fact that the length of the cycle is prime. At second, we estimate more precisely the dimension of the vector spaces involved, which gives a better estimation of the length of the resulted reset word.

First we need to recall basic notations from~\cite{Steinberg2011OneClusterPrime}. We consider the matrix representation
$\pi\colon \Sigma^* \to M_n(\Q)$ defined by $\pi(w)_{q,r} = 1$ if $qw=r$, and $0$, otherwise. Given $S\subseteq Q$ we define $[S]$ to be the characteristic row vector\index{characteristic vector} of $S$ in $\Q^n$, $[S]^T$ its transpose, and 
$$\gamma_S = [S]^T - (|S|/|C|)[Q]^T.$$
By $[C]w\gamma_S$ we denote the product of corresponding matrices; in particular, 
the word $w$ represents in this notation the matrix $\pi(w)$, and the whole product is an element of $\Q$. 
Further, for any word $w\in \Sigma^*$, $w\gamma_S$ is the vector obtained as the product of the transformation matrix corresponding to $w$ by the vector $\gamma_S$.

In~\cite{Steinberg2011OneClusterPrime} (and earlier papers), the following fact is used
\begin{equation}\label{bounds-eq:main}
[C]w\gamma_S = |C\cap Sw^{-1}| - |S|.
\end{equation}
If this difference is larger than zero then the preimage of $S$ by the word $w$ has more elements in the cycle $C$ than $S$ itself, and in consequence $w$ compresses $C$. So, in general, the approach is in looking for short words $w$ for which $[C]w\gamma_S > 0$.
We will be interested in the subspace $W_S = \mbox{\rm Span}\{a^{\ell+j}\gamma_S \in \Q^n \;|\; 0\le j \le m-1\}$ (cf.~\cite{Steinberg2011OneClusterPrime}).

In addition, we introduce the number $D(m,k)$ as follows. Let $[S] = (a_1,\ldots,a_m)$, $a_i\in \{0,1\}$. By the \emph{cyclic transforms}\index{cyclic transform} of $[S]$ we mean the following vectors:
$(c_1,\ldots,c_m)$, $(c_2,\ldots,c_m,c_1)$, $\ldots$, $(c_m,c_1,\ldots,c_{m-1})$. Given $1\le k \le m$, by $D(m,k)$ we denote the minimal dimension of the subspace $V_S$ of $\Q^m$ generated by the cyclic transforms of a vector $[S]$ with $|S|=k$ (that is, $[S]$ runs here over all vectors with exactly $k$ ones and $m-k$ zeros). Obviously, $D(m,1)=D(m,m-1)=m$. Yet, for example, $D(2k,k)=2$. More information about $D(m,k)$ can be inferred from~\cite{Ingleton1956RankOfCirculantMatrices}, where in particular the rank of the matrix generated by the cyclic transforms of a vector is considered. 

We can prove the following

\begin{lemma}\label{bounds-lem:main_dim}
Let $\mathcal{A}$ be a synchronizing one-cluster automaton with level $\ell \geq 0$ and cycle $C$ of length $m>1$.
Let $S$ be a subset of $C$ of cardinality $|S|=k>0$, and $W_S = \mbox{\rm Span}\{a^{\ell+j}\gamma_S \in \Q^n \;|\; 0\le j \le m-1\}$. Then $\dim\;W_S \ge D(m,k)-1$ and the sum of the generators
\begin{equation}\label{bounds-eq:zero}
\sum_{0\le j\le m-1} a^{\ell+j} \gamma_S = 0.
\end{equation}
\end{lemma}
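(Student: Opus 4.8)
The plan is to work entirely inside the cyclic structure of the cycle $C$ and translate the claims about $W_S$ into statements about circulant data on $\Q^m$, using the identification of $C$ with $\Z/m\Z$. First I would fix this identification so that the letter $a$ acts on $C$ as the cyclic shift; then, for a subset $S\subseteq C$ with $|S|=k$, the vectors $a^{\ell+j}\gamma_S$ restricted to $C$ are (up to the correction term coming from $(|S|/|C|)[Q]^T$) exactly the cyclic transforms of $[S]$. The key identity (\ref{bounds-eq:main}) will be the bridge: since $a^\ell$ maps $Q$ onto $C$ and fixes $C$ setwise, each $a^{\ell+j}\gamma_S$ lives in a controlled way over $C$, and $[C]a^{\ell+j}\gamma_S = |C\cap S a^{-(\ell+j)}| - |S|$ relates the pairing to shifted intersection numbers.

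For the equation (\ref{bounds-eq:zero}), the clean way is to sum (\ref{bounds-eq:main}) over $j$: $\sum_{j=0}^{m-1}|C\cap S a^{-(\ell+j)}|$ counts, for each $x\in C$, the number of $j$ with $xa^{\ell+j}\in S$, and since $j\mapsto a^{\ell+j}$ cycles through all $m$ rotations of $C$, every $x$ contributes exactly $|S|=k$; hence the sum equals $mk - mk = 0$. But I need the vector identity, not just that $[C]$ pairs to zero with it, so instead I would argue directly: $\sum_{j=0}^{m-1}a^{\ell+j}\gamma_S = \bigl(\sum_{j=0}^{m-1}a^{\ell+j}\bigr)\gamma_S$, and the matrix $\sum_{j=0}^{m-1}\pi(a^{\ell+j})$ sends every state of $Q$ to the uniform distribution $\frac{1}{m}$ on each state of $C$ — equivalently it equals $[Q$-to-$C]$ averaging — so applying it to $\gamma_S = [S]^T - (k/m)[Q]^T$ gives $\frac{1}{?}([C]^T\cdot k) - \frac{k}{m}([C]^T\cdot m)\cdot\frac{1}{?}$, which cancels. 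I would write this out by observing that $\sum_j \pi(a^{\ell+j})$ has constant columns over $C$ and zero elsewhere, so it annihilates any vector whose coordinate sum over $C$-preimages is balanced against its total sum in the way $\gamma_S$ is designed to be.

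For the dimension bound $\dim W_S \ge D(m,k)-1$: consider the linear map $\rho\colon \Q^n\to\Q^m$ that restricts a vector to its coordinates on $C$ (identified with $\Z/m\Z$). Then $\rho(a^{\ell+j}\gamma_S)$ equals the $j$-th cyclic transform of $[S]^T$ minus $(k/m)\mathbf{1}_m$, i.e. a translate of the cyclic transforms of $[S]$ all by the same fixed vector. The span of $\{$cyclic transforms of $[S]\}$ has dimension $\ge D(m,k)$ by definition, and subtracting a common vector from a spanning set of a subspace of dimension $d$ produces a set spanning an affine-or-linear space of dimension $\ge d-1$; more precisely the linear span of $\{v_j - c\}$ has dimension $\ge d-1$ whenever $\{v_j\}$ spans a $d$-dimensional space. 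Hence $\rho(W_S)$ has dimension $\ge D(m,k)-1$, and since $\rho$ is linear we get $\dim W_S \ge \dim \rho(W_S) \ge D(m,k)-1$.

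The main obstacle is making the reduction $\rho(a^{\ell+j}\gamma_S) = (\text{cyclic transform of }[S]) - (k/m)\mathbf{1}_m$ precise: one must check that the coordinates of $a^{\ell+j}\gamma_S$ on $C$ depend only on the action of $a^{j}$ on $C$ (since $a^\ell$ already maps into $C$ and acts there as a power of the shift), and that the $(k/m)[Q]^T$ term contributes exactly the uniform vector on $C$ after applying $\pi(a^{\ell+j})$. This is a direct computation with (\ref{bounds-eq:main}) applied to singletons $S=\{x\}$, $x\in C$, but one has to be careful that $a^{\ell}$ need not be injective on $Q$ — only on $C$ — so the argument should be phrased as: the $C$-coordinate of $a^{\ell+j}\gamma_S$ at $x\in C$ is $[\{x\}]a^{\ell+j}\gamma_S = |\{x\}\cap Sa^{-(\ell+j)}\cap C|\cdot(\text{something})$ — actually cleaner to use that $[\{x\}]\pi(a^{\ell+j})$ is the indicator of the single state $xa^{\ell+j}\in C$, so $[\{x\}]a^{\ell+j}\gamma_S = [S](xa^{\ell+j}) - k/m$, which is precisely the shifted-indicator-minus-uniform value claimed. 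Once this coordinate formula is in hand, both parts of the lemma follow from elementary linear algebra over $\Q$.
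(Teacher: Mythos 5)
Your proposal is correct and follows essentially the same route as the paper's own proof. You project $W_S$ onto the $C$-coordinates (the paper's orthogonal projection $\phi$ onto the complement of $[Q\setminus C]$ is exactly your restriction map $\rho$), identify the projected vectors as the $m$ cyclic transforms of $[S]$ restricted to $C$ translated by the common vector $(k/m)[C]^T$, invoke the definition of $D(m,k)$ for the untranslated cyclic transforms, and finish with the elementary fact that the linear span of $\{v_j-c\}$ has dimension at least $\dim\operatorname{Span}\{v_j\}-1$; the zero-sum identity follows in both proofs from the observation that for every $q\in Q$ the states $qa^{\ell+j}$, $0\le j\le m-1$, enumerate $C$ exactly once, so $\sum_j\pi(a^{\ell+j})$ has every row equal to $[C]$ and annihilates $\gamma_S$. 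The paper additionally records that $\bar d=(k/m)[C]^T$ does not lie in the projected span $V$ (because every $v_j-\bar d$ has zero coordinate sum), but that refinement is not needed for the lower bound and your version is complete without it.
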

\begin{proof}
First note, that by definition, 
$$a^{\ell+j}\gamma_S = a^{\ell+j}[S]^T - a^{\ell+j}\frac{k}{m}[Q]^T.$$ 
The summands are (as it is easy to check; see \cite{Steinberg2011OneClusterPrime}) the characteristic vectors of preimages 
$$[S(a^{\ell+j})^{-1}]^T - \frac{k}{m}[Q(a^{\ell+j})^{-1}]^T = [S(a^{\ell+j})^{-1}]^T - \frac{k}{m}[Q]^T.$$ 

Hence, 
\begin{equation}\label{bounds-eq:zero_sum}
\sum_{0\le j\le m-1} a^{\ell+j} \gamma_S = \left(\sum_{0\le j\le m-1}[S(a^{\ell+j})^{-1}]^T\right) - {k}[Q]^T.
\end{equation}
In order to compute the sum on the right hand side, we observe that, for each $q\in C$,
$$\sum_{0 \le j \le m-1} [q(a^{\ell+j})^{-1}]^T = [Q]^T .$$
This is so, because for every $q\in Q$ and $s\in C$, there is ${0\le j\le m-1}$ such that $qa^{\ell+j} = s$, and for all ${0\le i, j\le m-1}$, and $q\in Q$, if $qa^{\ell+j}=qa^{\ell+i}$ then $i=j$.
It follows that 
$$\sum_{0\le j\le m-1}[S(a^{\ell+j})^{-1}]^T = \sum_{0\le j\le m-1}\sum_{q\in S}[q(a^{\ell+j})^{-1}]^T = \sum_{q\in S}[Q]^T = k[Q]^T.$$
Combining this with (\ref{bounds-eq:zero_sum}) yields (\ref{bounds-eq:zero}).

It remains to estimate the dimension of $W_S$. Let us denote $w_j = [S(a^{\ell+j})^{-1}]^T$, and for $c\in\Q$, $\hat c =  c[Q]^T$. Then,  $a^{\ell+j}\gamma_S = v_j - \hat c$, for $c = k/m$. 
We consider the restriction $V \subset \Q^m$ of $W_S$ to the coordinates corresponding to $[C]$, which formally is the image in the orthogonal projection $\phi$ of $W_S$ on the orthogonal complement of $[Q\setminus C]$. Then, of course, $\mbox{\rm dim}\;W_S \ge \mbox{\rm dim}\;V$, and it is enough to estimate $\mbox{\rm dim}\;V$ from below. 

Let $v_i = \phi(w_i) \in V$ be the image of $w_i \in W_S$ (i.e. restriction of $w_i$ to $m$ coordinates corresponding to $C$). Then $v_0$ is the characteristic vector of $S$ in $C$, and $v_0, \ldots, v_{m-1}$ are simply the cyclic transforms of $v$ in $\Q^m$. Consequently, for $U=\mbox{\rm Span} \{v_0, \ldots, v_{m-1} \}$,   $\mbox{\rm dim}\;U \geq D(m,k)$. Moreover, we have
$V = \mbox{\rm Span} \{v_0 - \bar d, \ldots, v_{m-1} - \bar d\}$, where $\bar d \in \Q^m$ denotes  $\bar d =  d[C]^T$ with $d=k/m$. Since $\sum_{0\le j \le m-1} v_j = \bar k$,
$$U = \mbox{\rm Span} \{v_0, \ldots, v_{m-1}, \bar d \} = \mbox{\rm Span} \{v_0 - \bar d, \ldots, v_{m-1} - \bar d, \; \bar d \} = \mbox{\rm Span} \{V, \bar d\}.
$$

Now, since the sum of the coordinates in each $v_i - \bar d$ is equal to $$k(1-k/m) - (m-k)k/m = 0,$$ it follows that $\bar d \notin V$.  Consequently
$$\mbox{\rm dim}\;V \ge  \mbox{\rm dim}\;U-1 \ge D(m,k)-1,$$
as required.
\qed
\end{proof}

Using this lemma in place of Lemma~4, Proposition~5, and Lemma~6 from \cite{Steinberg2011OneClusterPrime}, one can obtain a generalization of the Steinberg's bound (\ref{bounds-eq:prime}) with no assumption on the length of $C$. We can still generalize this result as follows. We observe that for $S\subseteq C$, 
$$W_S = \mbox{\rm Span}\{a^{\ell+j}\gamma_S \in \Q^n \;|\; 0\le j \le m-1\} = \mbox{\rm Span}\{a^{\ell+j}\gamma_S \in \Q^n \;|\; 0\le j \le q-1\},$$
where $q=q_S$ is the \emph{cyclic period} of $S$, understood as the least number $q$ such that $Sa^q=S$. In case when $S$ is not periodic on $C$, $q=m$, and nothing changes. But if $q<m$, then $q\leq m/2$, and $m$ in~(\ref{bounds-eq:zero}) may be replaced by $q$, giving better estimations of the reset length in the proof.\footnote{We are very grateful to Mikhail Berlinkov, who brought our attention to this fact.} Let us define ${D}^*(m,k)$ to be the minimal value of $m-q_S+\dim W_S$ taken over all vectors $S$ with $|S|=k$. Then we have the following:

\begin{theorem}\label{thm:one-cluster_bound}
Let $\mathcal{A} = \langle Q, \Sigma, \delta \rangle$ be a synchronizing automaton with $n$ states, such that there exists a word $w$ of length $s$ inducing a one-cluster transformation with level $\ell$ and cycle $C$ of length $m>1$.
Then $\mathcal{A}$ has a reset word of length at most
$$s(\ell +m-2)(m-1) + (n+1)(m-1)+s\ell - \sum_{k=1}^{m-1} {D}^*(m,k).$$
\end{theorem}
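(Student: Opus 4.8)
The plan is to mimic Steinberg's argument from \cite{Steinberg2011OneClusterPrime} but replace his prime-length-specific lemma by Lemma~\ref{bounds-lem:main_dim} and exploit the cyclic period. First I would reduce to the case $s=1$ in the bookkeeping, i.e. think of $a$ as the one-cluster letter and keep track that each application of $a$ in the final word actually costs $s$ symbols of the original alphabet; this is why every term multiplying a power of $a$ in the final bound carries a factor $s$. The overall strategy is a descent on the rank: starting from $Q$, apply $w$ exactly $\ell$ times (cost $s\ell$) to reach the cycle $C$, and then repeatedly find, for the current subset $S\subseteq C$, a short word $u$ with $[C]u\gamma_S>0$, which by \eqref{bounds-eq:main} means $|C\cap Su^{-1}|>|S|$, hence $u$ strictly increases the number of cycle-elements hit; iterating drives the rank down to $1$.

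The core step is: given $S\subseteq C$ with $|S|=k$, $1\le k\le m-1$, bound the length of a word $u$ with $[C]u\gamma_S>0$. Here I would use the subspace $W_S=\mbox{\rm Span}\{a^{\ell+j}\gamma_S\mid 0\le j\le q_S-1\}$, which by the remark preceding the theorem equals the span over $0\le j\le m-1$, and whose dimension is controlled by Lemma~\ref{bounds-lem:main_dim}. The point of \eqref{bounds-eq:zero} is that the $m$ (equivalently $q_S$) generators $a^{\ell+j}\gamma_S$ sum to zero, so not all of them can have $[C]a^{\ell+j}\gamma_S\le 0$ unless all are zero; since $S$ is compressible in a synchronizing automaton, some generator is nonzero, so some $a^{\ell+j}\gamma_S$ gives a positive value — this furnishes a word of length at most $s(\ell+m-1)$ (or $s(\ell+q_S-1)$) doing one step of compression. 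One then feeds this back: Steinberg's Lemma~6 analogue shows that after collecting enough such witnesses spanning $W_S$, one can combine them, via the matrix $\pi(a^\ell)$ and the structure of the cycle, into a single word; the precise length accounting is where the $(\ell+m-2)(m-1)$ and $(n+1)(m-1)$ terms arise, and where $\dim W_S$ (equivalently $m-q_S+\dim W_S$, hence $D^*(m,k)$) enters as a savings term subtracted from the naive estimate.

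Concretely, I would run the following accounting. Each decrease of $|S|$ — from $k+1$ down to $k$, say — costs a word whose length, by the $W_S$-dimension argument, is at most $s(\ell+m-2) + (n+1) - (m-q_S+\dim W_S) + (\text{small correction})$, where the leading term $s(\ell+m-2)$ bounds the "internal" cost of moving inside the cycle part, $n+1$ accounts for the level-reentry and one extra application, and we save $m-q_S+\dim W_S\ge D^*(m,k)$ because a larger spanned space means a shorter certificate suffices. Summing $k$ from $1$ to $m-1$ (there are $m-1$ such rank-reduction steps to go from a full cycle configuration down to rank $1$) yields $(m-1)s(\ell+m-2) + (m-1)(n+1) - \sum_{k=1}^{m-1}D^*(m,k)$, and adding the initial cost $s\ell$ of reaching the cycle gives exactly the claimed
$$s(\ell+m-2)(m-1) + (n+1)(m-1) + s\ell - \sum_{k=1}^{m-1}D^*(m,k).$$

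The main obstacle — and the part requiring genuine care rather than routine calculation — is the precise length bound for a single rank-reduction step, i.e. converting "$W_S$ has dimension $d$ and its generators sum to zero" into "there is an explicit word of the stated length that compresses the current cycle-subset". This is where Steinberg's linear-algebra machinery (his Proposition~5 and Lemma~6, here replaced by consequences of Lemma~\ref{bounds-lem:main_dim}) must be reworked so that (a) it no longer uses primality of $m$, and (b) the dimension estimate $\dim W_S\ge D^*(m,k)-(m-q_S)$, with the period-$q_S$ economy, is threaded through the bound rather than discarded. One must also verify that the step always applies, i.e. that the relevant generator is genuinely nonzero whenever $S$ is compressible — which follows because in a synchronizing automaton a proper nonempty subset of the cycle is always compressible, so $\gamma_S\ne 0$ and \eqref{bounds-eq:zero} forces some $a^{\ell+j}\gamma_S\ne 0$ with positive $[C]$-component. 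The remaining steps (the descent structure, the $s\ell$ prefix, the summation) are straightforward bookkeeping once this lemma-level estimate is in place.
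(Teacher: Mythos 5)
Your high-level plan (descend through the cycle rank by rank, use the zero-sum identity from Lemma~\ref{bounds-lem:main_dim} to guarantee a positive pairing, charge each rank drop roughly $n+\ell+m-D^*(m,k)$, tack on the $s\ell$ prefix, and handle $s>1$ by pretending $w$ is a new letter) is the right skeleton and matches the paper. But the description of the single rank-reduction step — the only part that is not routine bookkeeping — contains a genuine error and omits the one mechanism that actually makes the step work.

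You write that since the generators $a^{\ell+j}\gamma_S$ sum to zero and $S$ is compressible, ``some generator is nonzero, so some $a^{\ell+j}\gamma_S$ gives a positive value,'' furnishing a compressing word of length $\le s(\ell+m-1)$. This confuses the \emph{vectors} $a^{\ell+j}\gamma_S$ being nonzero with the \emph{pairings} $[C]a^{\ell+j}\gamma_S$ being nonzero. In fact for every $S\subseteq C$ and every $j$, the pairing $[C]a^{\ell+j}\gamma_S$ is \emph{identically zero}: by \eqref{bounds-eq:main} it equals $|C\cap S(a^{\ell+j})^{-1}|-|S|$, and since $a^{\ell+j}$ maps $Q$ onto $C$ and permutes $C$, the preimage of $S$ meets $C$ in exactly $|S|$ points. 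So the ``empty prefix'' never works, your claimed word of length $s(\ell+m-1)$ never exists, and the zero-sum identity alone gives nothing. The step that is actually doing the work, and which your proposal only gestures at (``Lemma~6 analogue,'' ``collecting witnesses spanning $W_S$''), is the ascending-chain argument: because $\mathcal{A}$ is synchronizing there is \emph{some} word $u$ with $[C]ua^{\ell}\gamma_S\neq 0$, and the chain $W_S\subseteq W_S+\mathrm{Span}\{\Sigma W_S\}\subseteq\cdots$ inside the $(n-1)$-dimensional space $[C]^\perp$ must escape $[C]^\perp$ within $\dim[C]^\perp-\dim W_S+1\le n-D(m,k)+1$ steps. \emph{That} is where the $n$ appears in the per-step cost and where $\dim W_S$ is subtracted; it is not a matter of ``combining witness words.'' Your proposal also contradicts itself on this point: the core-step paragraph claims a word of length $\le s(\ell+m-1)$, while the accounting paragraph charges $s(\ell+m-2)+(n+1)-D^*(m,k)$ per step, and no argument is given for the latter figure. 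Finally, to land exactly on the theorem's constant you need the paper's small optimization of truncating the last word $u_1$ from $wa^{\ell+j}$ to $wa^{\ell}$, which you never mention; without it the uniform per-step estimate overshoots by $m-1$.
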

\begin{proof}
First we prove the result for $s=1$ and $D(m,k)$ in place of $D^*(m,k)$. We modify suitably the argument used in the proof in~\cite{Steinberg2011OneClusterPrime}. 

Let $S$ be a proper subset of $Q$ with $0< k=|S| < m$. We wish to show first that there exists a short word $w\in\Sigma^*$ with $[C]wa^{\ell+j}\gamma_S \neq 0$ for some $0\le j \le m-1$. If this holds for the empty word, we are done. Otherwise, $[C]a^{\ell+j}\gamma_S = 0$ for all $0\le j \le m-1$. This means $a^{\ell+j}\gamma_S \in [C]^\perp$ (the orthogonal complement of $[C]$). Since $\mathcal{A}$ is synchronizing, there exists a word $u$ resetting to a state in $C\cap S(a^{\ell})^{-1}$. Then, by~(\ref{bounds-eq:main}),
\begin{equation}\label{bounds-eq:last}
[C]ua^{\ell}\gamma_S = |C\cap S(ua^{\ell})^{-1}| - |S| = |C\cap Q| - |S| \neq 0.
\end{equation}
To find $u$ short enough with this property, let $W_S = \mbox{\rm Span}\{a^{\ell+j}\gamma_S \in \Q^n \;|\; 0\le j \le m-1\}$. We have $W_S \subseteq [C]^\perp$, yet by~(\ref{bounds-eq:last}), $uW_S \not\subseteq [C]^\perp$. 
By the standard ascending chain condition (see~\cite[Lemma~2]{Steinberg2011OneClusterPrime}; also cf.~\cite{Pin1972Utilisation,Dubuc1998,Kari2003Eulerian}), we infer that there exists a word $w$ satisfying $[C]wa^{\ell+j}\gamma_S \neq 0$ for some $j$, whose length $|w| \le \mbox{\rm dim}\;{C}^\perp - \mbox{\rm dim}\;W_S +1.$
By~(\ref{bounds-lem:main_dim}), we get $|w| \le n - D(m,k) +1.$

Moreover, by the same lemma, 
\begin{equation}\label{bounds-eq:sum}
\sum_{0\le j\le m-1}\!\![C]wa^{\ell+j}\gamma_S = [C]w\!\!\!\!\sum_{0\le j\le m-1}\!\!a^{\ell+j}\gamma_S = 0.
\end{equation}
Since $[C]wa^{\ell+j}\gamma_S \neq 0$ for some $j$, there must be $j$ such that $[C]wa^{\ell+j}\gamma_S > 0$. This means, by (\ref{bounds-eq:main}), that
$|C\cap S(wa^{\ell+j})^{-1}| - |S|  > 0,$
for some $0\le j\le m-1$ and 
$ |w| \le n - D(m,k) +1.$
Since $j\le m-1$, it means that there exists a word $u_k = w a^{\ell+j}$ of length at most $n+\ell+m-D(m,k)$ such that $|C\cap S(u_k^{-1})| > |S|$.

Thus, as in~\cite{Steinberg2011OneClusterPrime}, we may find a sequence of words $u_1,u_2,\ldots,u_{m-1}$ such that
starting from an arbitrary one-element set $S_1=\{q\}$, we have $|S_{k}| > |S_{k-1}|$ for $S_k =C\cap Su_k^{-1}$, $1\le k \le m-1$, and the length of $|u_k| \le n+\ell+m-D(m,k)$. In particular, the word $u=u_{m-1}\ldots u_2u_1$ has the property $Cu = \{q\}$, and since  $Qa^\ell = C$, the word $a^\ell u$ is synchronizing. Since the suffix of this word is $u_1 = wa^{\ell+j}$, we may replace $u_1$ by a shorter word $u'_1 = wa^\ell$, and  $v= a^\ell u_{m-1}\ldots u_2 u'_1$ is also synchronizing. For the length (using the inequalities obtained in the preceding paragraph) we have 
$$v \le \ell + (m-2)(n+\ell+m)  - \left(\sum_{2 \le k \le m-1} D(m,k)\right) + n-D(m,1) +1 +\ell.$$

Now we take into account the cyclic period $q=q_S$ of $S$. We observe that
$$W_S = \mbox{\rm Span}\{a^{\ell+j}\gamma_S \in \Q^n \;|\; 0\le j \le m-1\} = \mbox{\rm Span}\{a^{\ell+j}\gamma_S \in \Q^n \;|\; 0\le j \le q-1\}.$$
Moreover, the formula (\ref{bounds-eq:sum}) holds with $m$ replaced by $q$. Consequently, the same argument shows that there exists a word $u_k = w a^{\ell+j}$ of length at most $n+\ell+q-\mbox{\rm dim}\;W_S$ such that $|C\cap S(u_k^{-1})| > |S|$. Since $n+\ell+q-\mbox{\rm dim}\;W_S = n+\ell+m - (m-q+ \mbox{\rm dim}\;W_S)$, the length $|u_k| \leq n+\ell+m - D^*(m,k)$, and we get the result with $D(m,k)$ replaced by $D^*(m,k)$. 

Finally, to complete the proof let us assume that $\mathcal{A}$ has a one-cluster transformation corresponding to a word $v_a$. Let $\mathcal A'$ be the automaton obtained from $\mathcal{A}$ by adding to its alphabet an additional letter $a$ acting exactly as $v_a$. Applying the proof above to $\mathcal{A}'$, taking into account the places where the length $|v_a| = s$ counts, we obtain the required result.
\qed
\end{proof}

Let us see that it generalizes the mentioned Steinberg's result indeed.
Ingleton~\cite{Ingleton1956RankOfCirculantMatrices} showed that the dimension of the vector space generated by the transforms of a vector $[S] = (c_1,\ldots,c_m)$ is exactly $m-d$, where $d$ is the degree of the polynomial $g(x) = \gcd(c_1+c_2x+\ldots+c_mx^{m-1},\; x^m-1)$ in $\Q[x]$. In this case, when $m$ is prime, $g(x) = 1$, and $D^*(m,k)=m$ for all $k$. Substituting this in~Theorem~\ref{thm:one-cluster_bound}, for $s=1$, we obtain exactly the result (\ref{bounds-eq:prime}).

Our result is better then the bound (\ref{bounds-eq:CarpiD'Alessandro}) obtained by Carpi and D'Alessandro. The summands $D^*(m,k)$ in our formula, involving the greatest common divisors of polynomials, are not easy to estimate. But even a rough estimation $D^*(m,k) \geq \lceil m/k \rceil$ (as in~\cite[Lemma~4]{CarpiDAlessandro2013IndependendSetsOfWords}) yields the upper bound 
$
2nm-2m\ln\frac{m+1}{2} -2m - n + 1,
$
which is better by the summand $m$. A more sophisticated estimation yields the following 
\begin{corollary}\label{cor:one-cluster_estimation}
A synchronizing one-cluster automaton $\mathcal{A}$ with $n$ states and the cycle of length $m$ has a reset word of length at most
\begin{equation}
2nm-4m\ln\frac{m+3}{2}+2m-n+1
\end{equation}
\end{corollary}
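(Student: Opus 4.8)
The plan is to derive the stated bound from Theorem~\ref{thm:one-cluster_bound} and reduce everything to one combinatorial inequality about $\sum_{k=1}^{m-1}D^*(m,k)$. Since $\mathcal A$ is one‑cluster, some letter $a$ itself induces a one‑cluster transformation, so Theorem~\ref{thm:one-cluster_bound} applies with $s=1$; moreover the level of a one‑cluster automaton on $n$ states with a cycle of length $m$ satisfies $\ell\le n-m$, and the bound of Theorem~\ref{thm:one-cluster_bound} is increasing in $\ell$, so the worst case is $\ell=n-m$. When $\ell=n-m>0$ the non‑cyclic part of $\mathcal A$ consists of exactly $n-m$ states and has level $n-m$, hence it is forced to be a single directed path, so the hypothesis of Lemma~\ref{lem:unique_highest_one-cluster_tree} holds and we may improve the bound by a further $m-1$. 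Putting $s=1$, $\ell=n-m$ and this improvement into Theorem~\ref{thm:one-cluster_bound} gives a reset word of length at most $2nm-n-3m+2-\sum_{k=1}^{m-1}D^*(m,k)$ (the cases $\ell=0$, i.e.\ $n=m$, and $\ell<n-m$ only make the bound smaller). Hence the corollary will follow once we prove
$$\sum_{k=1}^{m-1}D^*(m,k)\ \ge\ 4m\ln\tfrac{m+3}{2}-4m .$$

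For this I would first establish the symmetry $D^*(m,k)=D^*(m,m-k)$. For $S\subseteq C$ one has $\gamma_S+\gamma_{C\setminus S}=[C]^T-[Q]^T$, and since $Qa^{\ell+j}=C$ for every $j\ge0$ this vector is annihilated by every matrix $a^{\ell+j}$; therefore $W_S=W_{C\setminus S}$, and clearly $q_S=q_{C\setminus S}$, so the two minima coincide. This reduces the sum to $\sum_{k=1}^{m-1}D^*(m,k)=2\sum_{1\le k<m/2}D^*(m,k)$ plus, when $m$ is even, the middle term $D^*(m,m/2)$, so it is enough to bound $D^*(m,k)$ well for $k\le m/2$. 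Now fix $k$ and let $S$ realize the minimum in $D^*(m,k)$, with cyclic period $q=q_S$. By Lemma~\ref{bounds-lem:main_dim} (whose proof in fact gives $\dim W_S\ge\dim U_S-1$), $D^*(m,k)=(m-q)+\dim W_S\ge(m-q)+\dim U_S-1$, where $U_S$ is the span in $\Q^m$ of the cyclic transforms of the characteristic vector of $S$ on $C$. Writing $c_S(x)=\tilde c(x)\,(x^m-1)/(x^q-1)$ with $\deg\tilde c<q$ (so $\tilde c$ is a $0/1$ polynomial with $kq/m$ ones, and $q$ is the exact period of $S$), Ingleton's formula gives $\dim U_S=m-\deg\gcd(c_S,x^m-1)=q-\deg\gcd(\tilde c,x^q-1)$, hence
$$D^*(m,k)\ \ge\ m-1-\deg\gcd\bigl(\tilde c(x),\,x^q-1\bigr).$$

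The decisive step — and the one I expect to be the main obstacle — is to bound $\deg\gcd(\tilde c,x^q-1)$ sharply. The crude estimate $\deg\gcd(\tilde c,x^q-1)\le q-\lceil q/(kq/m)\rceil=q-\lceil m/k\rceil$, i.e.\ $D^*(m,k)\ge\lceil m/k\rceil$ as in \cite[Lemma~4]{CarpiDAlessandro2013IndependendSetsOfWords}, only yields, after summation with the symmetry above, $\sum_{k=1}^{m-1}D^*(m,k)\ge2m\ln\frac{m+1}{2}$ and hence the weaker bound $2nm-2m\ln\frac{m+1}{2}-n-2m+1$ mentioned before the corollary. To gain the extra factor one has to exploit that $\tilde c$ has $0/1$ coefficients and has exact period $q$: its $\gcd$ with $x^q-1$ cannot contain, for the least prime $p\mid q$, the whole product of the cyclotomic factors $\Phi_d$ of $x^q-1$ with $v_p(d)=v_p(q)$, and the $0/1$ restriction rules out many further combinations of cyclotomic factors, so $\deg\gcd(\tilde c,x^q-1)$ stays away from $q$ by a factor essentially governed by $k$. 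Feeding the resulting per‑$k$ lower bound on $D^*(m,k)$ into $2\sum_{1\le k<m/2}D^*(m,k)$ and estimating the arising harmonic sum by a logarithm, while carefully tracking the additive constants, produces $\sum_{k=1}^{m-1}D^*(m,k)\ge4m\ln\frac{m+3}{2}-4m$ and therefore the claimed reset bound $2nm-4m\ln\frac{m+3}{2}-n+m+2$. The whole difficulty is concentrated in making the cyclotomic/$0/1$ estimate just tight enough and in choosing the auxiliary quantities so that the constants land exactly as stated.
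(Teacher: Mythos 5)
Your high‑level plan is exactly the paper's: invoke Theorem~\ref{thm:one-cluster_bound} with $s=1$, observe the bound is increasing in $\ell\le n-m$, use Lemma~\ref{lem:unique_highest_one-cluster_tree} in the extremal case $\ell=n-m$ (your justification that the non‑cyclic part is then a single path, making the lemma's hypothesis automatic, is correct and the arithmetic giving $2nm-n-3m+2-\sum_k D^*(m,k)$ checks out), and reduce everything to $\sum_{k=1}^{m-1}D^*(m,k)\ge 4m\ln\frac{m+3}{2}-4m$. Your symmetry $D^*(m,k)=D^*(m,m-k)$ via $\gamma_S+\gamma_{C\setminus S}=[C]^T-[Q]^T\in\ker\pi(a^{\ell+j})$ is also correct and is what the paper uses implicitly by summing only to $\lfloor m/2\rfloor$.

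However, there is a genuine gap at the decisive step, and you flag it yourself. Everything hinges on improving the crude $D^*(m,k)\ge\lceil m/k\rceil$ to something with an extra factor of~$2$, and you propose to get this from cyclotomic‑factor considerations about $\gcd(\tilde c(x),x^q-1)$ for $0/1$ polynomials. That sketch is not a proof: the assertion that the gcd ``cannot contain the whole product of the $\Phi_d$ with $v_p(d)=v_p(q)$'' and that the $0/1$ restriction ``rules out many further combinations'' is speculation without a quantitative statement, and it is not at all clear it can be pushed to a clean $D^*(m,k)\gtrsim 2m/(k+1)$ uniformly in $k$. The paper instead proves $D^*(m,k)\ge 2m/(k+1)$ by an elementary counting argument, entirely avoiding Ingleton's formula: take a basis of $V_S$ among the cyclic shifts of the $0/1$ vector $[S]$; non‑periodicity of $S$ forces, for each pair of coordinate positions, some shift distinguishing them, hence at most $\dim V_S$ positions are covered by exactly one basis vector and the remaining $m-\dim V_S$ positions are covered by at least two; counting ones gives $k\dim V_S\ge \dim V_S+2(m-\dim V_S)$, i.e.\ $\dim V_S\ge 2m/(k+1)$. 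Feeding $2m/(k+1)$ (rather than $m/k$) into the harmonic sum is precisely what produces $4m\ln\frac{m+3}{2}-4m$ and the final constants. Without this counting lemma (or an actually carried‑out replacement), your proposal does not establish the corollary.
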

\begin{proof}
First we show that $D^*(m,k) \ge 2m/(k+1)$. Let $S$ be a non-periodic subset of $C$ with $k$ states. Consider a linearly independent subset of the cyclic transforms of $[S]$ that generate $V_S$. Since $S$ is non-periodic, for each $i,j$ with $1 \le i \neq j \le m$ there is a cyclic transform of $[S]$ which has distinct values at $i$-th and $j$-th positions. So there are at most $\dim V_S$ positions at which exactly one vector contains 1. At the remaining $m-\dim V_S$ positions at least two vectors contain 1. Thus there are at least $\dim V_S+2(m-\dim V_S)$ ones in all the vectors in the base, so $k \dim V_S \ge \dim V_S+2(m-\dim V_S)$ and $\dim V_S \ge 2m/(k+1)$.

Now, following the estimation given in~\cite[Proposition~5]{CarpiDAlessandro2013IndependendSetsOfWords}, and using our bound $D^*(m,k) \ge 2m/(k+1)$, for odd $m$ we have
$$\sum_{k=1}^{m-1} D^*(m,k) \geq 2\sum_{k=1}^{(m-1)/2} D^*(m,k) \geq 4m\sum_{k=2}^{(m+1)/2} 1/k \geq 4m\ln\frac{m+3}{2} - 4m,$$
and for even $m$
\begin{eqnarray*}
\sum_{k=1}^{m-1} D^*(m,k) & \geq & D^*(m,m/2) + 2\sum_{k=1}^{(m-2)/2} D^*(m,k) \\
& \geq & \frac{2m}{m/2+1} + 4m\sum_{k=2}^{m/2} 1/k \\
& \geq & 4m/(m+2) -4m + 4m\ln\frac{m+2}{2} \\
& \geq & 4m\ln\frac{m+3}{2} - 4m
\end{eqnarray*}

From Theorem~\ref{thm:one-cluster_bound}, for $s=1$, we obtain the bound
$$(\ell +m-2)(m-1) + (n+1)(m-1)+\ell - 4m\ln\frac{m+3}{2} + 4m.$$
This reaches the maximum for $\ell = n-m$ and is at most
$$2nm-4m\ln\frac{m+3}{2}+2m-n+1.$$
\qed
\end{proof}

In computation, we get the best result applying the formula in Theorem~\ref{thm:one-cluster_bound} with the summand $\sum_{k=1}^{m-1} D^*(m,k)$, which can be easily computed for small $m$. Table~\ref{bounds-tab:D*mk} shows the values for small $m$ and $k$.
In Table~\ref{bounds-tab:sumD*mk}, the values of these summand are given for small non-prime $m$, and the advantages they yield over the bounds in~\cite{Steinberg2011OneClusterPrime,CarpiDAlessandro2013IndependendSetsOfWords}.
To compute the advantages we apply the estimation $m+\ell \leq n$ (with $s=1$ in Theorem~\ref{thm:one-cluster_bound}), and use the fact that the obtained bounds have the same part $2mn + n$ involving $n$. Therefore, the advantages (differences) do not depend on $n$.

\begin{table}
\centering
\caption{The values of $D^*(m,k)$ and $\sum_{k=2}^{m-2} D^*(m,k)$ for non-prime $m$.}\label{bounds-tab:D*mk}
\newcolumntype{C}{>{\centering\let\newline\\\arraybackslash\hspace{0pt}}m{.4cm}}
\begin{tabular}{|c||C|C|C|C|C|C|C|C|C|C|C|C||c|}\hline
$m \backslash k$& 1 & 2& 3& 4& 5& 6& 7& 8& 9&10&11&12 & $\sum D^*(m,k)$\\ \hline \hline
               4& 4 & 3& 4& 1&  &  &  &  &  &  &  &  & 3 \\ \hline
               6& 6 & 5& 4& 5& 6&1 &  &  &  &  &  &  & 14 \\ \hline
               8& 8 & 6& 8& 5& 8&6 &8 & 1&  &  &  &  & 33 \\ \hline
               9& 9 & 9& 7& 9& 9& 7&9 & 9&1 &  &  &  & 50 \\ \hline
              10& 10& 9&10& 9& 6& 9&10& 9&10&1 &  &  & 62 \\ \hline
              12& 12& 9& 8& 7& 8& 7& 8& 7&8 &9 &12& 1& 71 \\ \hline
\end{tabular}
\end{table}

\begin{table}
\centering
\caption{The values of $\sum_{k=1}^{m-1} D^*(m,k)$ and the advantages they yield over other results.}\label{bounds-tab:sumD*mk}
\newcolumntype{C}{>{\centering\let\newline\\\arraybackslash\hspace{0pt}}m{.5cm}}
\begin{tabular}{|l||C|C|C|C|C|C|}\hline
\hspace*{12ex}{$m$}                                                  & 4 & 6 & 8 & 9 & 10 & 12 \\ \hline\hline
$\sum {D}^*(m,k)$                                                    & 11& 26& 49& 68& 82 & 95 \\ \hline
Carpi, D'Alessandro \cite{CarpiDAlessandro2013IndependendSetsOfWords} & 5 & 10& 16& 20& 23 & 29 \\ \hline
Steinberg~\cite{Steinberg2011OneClusterPrime}                        & 9 & 17& 25& 29& 33 & 41 \\ \hline
\end{tabular}
\end{table}


\section{The Algorithm} 

We describe some details of the algorithm that we use in searching for interesting examples of synchronizing automata. Note that the number of automata grows very quickly with the number of letters (see \cite{AGV2013}). To overcome this, we exclude \emph{a priori} a large part of automata from generation process. To do so, we apply the results developed in the previous section.

Such algorithm is useful not only for verifying the \v{C}ern\'{y} conjecture, but for investigating various conjectures in the area.
Our goal is to exhaustively search over the automata with given size $n$ and arity $k$ and to report all those with a long reset length, say with a reset length longer that a predefined $\mathtt{threshold}$. For example, setting $\mathtt{threshold} = (n-1)^2$ verifies the \v{C}ern\'{y} conjecture for the fixed $n$ and $k$.

Obviously, we are interested only in \emph{irreducibly synchronizing} automata, that is those for which removing any letter results with a non-synchronizing automaton. Also, it is well known \cite{Volkov2008Survey} that to verify the \v{C}ern\'{y} conjecture it is enough to consider only \emph{strongly connected} automata (those with the underlying digraph strongly connected), so a special attention is paid to this class.

\subsection{Sieving Procedure}

As we have already mentioned we generate all $k$-ary automata with $n$ states running successively the algorithm described in \cite{KS2013GeneratingSmallAutomata} for arities $i=2,\ldots,k$. For each automaton $\mathcal{A}$ generated by the algorithm, we apply the sieving procedure (Algorithm~\ref{alg:sieving_procedure}). The procedure checks if an automaton should be reported, and whether it should be kept for the next $(i+1)$-th run of the algorithm.

\begin{algorithm}\caption{Sieving Procedure}\label{alg:sieving_procedure}
\begin{algorithmic}[1]
\Require $\mathcal{A}$ -- a generated automaton with $n$ states on $k$ letters.
\Require $\mathtt{threshold}$ -- the bound restricting reset length.
\Procedure{Sieve}{$\mathcal{A}$}
\If{$\mathcal{A}$ is synchronizing}
  \If{$\mathcal{A}$ is strongly connected and irreducibly synchronizing}  
    \State $\ell \gets \text{ the reset length of } A$
    \LineIf{$\ell \ge \mathtt{threshold}$}{Report $\mathcal{A}$}
  \EndIf
\Else
  \State Compute the minimal rank $m$, a word $u$ of this rank, and the set of all compressible pairs $P$.
  \State Compute the bound from successive applications of Theorem~\ref{thm:pairs_bound} and Theorem~\ref{thm:pin_rank_bound} for $\mathcal{A}$.
  \LineIf{the bound is not larger than $\mathtt{threshold}$}{\Return}
  \State $T_\mathcal{A} \gets \text{the transition semigroup of } \mathcal{A}$
  \ForAll{$t \in T_\mathcal{A}$ such that $t$ is a one-cluster transformation}
    \LineIf{the bound from Theorem~\ref{thm:one-cluster_bound} is not larger than $\mathtt{threshold}$}{\Return}
  \EndFor
  \LineIf{$\{t_a\colon a \in \Sigma\}$ is a reducible set of generators of $T_\mathcal{A}$}{\Return}
  \State Store $\mathcal{A}$ for the next run
\EndIf
\EndProcedure
\end{algorithmic}
\end{algorithm}

First we check if $\mathcal{A}$ is synchronizing (line~2). If so, we check if it is irreducibly synchronizing and if its reset length is larger than $\mathtt{threshold}$ (lines~3-6), so we could report it.

If $\mathcal{A}$ is not synchronizing, then it is a potential candidate for further processing in the next run of the algorithm. Using the methods from Section~\ref{sec:bounds} we check if all its irreducibly synchronizing extensions of $\mathcal{A}$ have reset length not larger than $\mathtt{threshold}$.

We compute the minimal rank $m$ of $\mathcal{A}$, a word $u$ of this rank, and the set of all compressible pairs $P$ (line~8). This can be done by a standard BFS algorithm on the power automaton, in the same manner as computing a shortest reset word. It can be seen (as we have mentioned in the remark following Theorem~\ref{thm:pin_rank_bound}) that using the minimal rank and a shortest word of this rank yields the best bound from Theorem~\ref{thm:pairs_bound} and Theorem~\ref{thm:pin_rank_bound}. Hence we apply successively both the propositions for $r=m,\ldots,2$ (line~9).
To use Theorem~\ref{thm:pairs_bound} we also need to know $h(P)$, which is the maximum synchronizing height over the pairs from $P$, and $p(P,m)$, which is the length of an $m$-subset Frankl-Pin sequence over $P$. The value of $h(P)$ is easily computable having $P$. However, there is no known effective algorithm finding a longest $r$-subsets Frankl-Pin sequence over $P$, and a brute-force algorithm has potentially at least double exponential running time. Hence we use the following greedy algorithm to compute such a sequence: We pick a pair $P_i \in P$ whose states are involved in the least number of the other pairs from $P$. We remove $P_i$ from $P$. Then we try to find $M_i \subseteq Q$ in a similar greedy manner. If $M_i$ of the size $r$ is found, then we append $(P_i,M_i)$ to the sequence. We continue this process until $P$ is empty.
If the bound is not larger than $\mathtt{threshold}$, then we skip the automaton (line~10).

For further estimations, we compute the transition semigroup $T_\mathcal{A}$ of $\mathcal{A}$ \cite{FP1997AlgorithmsFiniteSemigroups} (line~11). For each transformation in $T_\mathcal{A}$ we store the length of the shortest words inducing it. Then for each one-cluster transformation we check the bound from Theorem~\ref{thm:one-cluster_bound} (lines~12-14).

Finally, since we are interested only in irreducibly synchronizing automata, we skip also $\mathcal{A}$ if the set of the transformations induced by the letters $a \in \Sigma$ is a reducible set of generators. This is done at the end, since this procedure has high computational cost.

Table~\ref{tab:exclusions_numbers} illustrates savings in computation resulted due to using various exclusions in the sieving procedure. It contains the numbers of automata remaining after exclusions based on the results named in the first column. The second and the third columns contain numbers for the two selected cases of computation with different $\mathtt{threshold}$, $n$, and $k$. Note that the most of exclusions are provided by Theorem~\ref{thm:pairs_bound}.

\begin{table}
\caption{The numbers of non-synchronizing automata (without identity) remaining for the next run in case I) $\mathtt{threshold}=n^2-5n+9$, $n=7$, $k=2$; \ II) $\mathtt{threshold}=(n-1)^2$, $n=6$, $k=3$.}
\centering\label{tab:exclusions_numbers}
\newcolumntype{C}{>{\centering\let\newline\\\arraybackslash\hspace{0pt}}m{.5cm}}
\begin{tabular}{|l||r|r|}\hline
&\multicolumn{1}{c|}{I}&\multicolumn{1}{c|}{II} \\ \hline \hline
No exclusions &\ 7,864,973 &\ 187,138,741 \\ \hline
Irreducible generators of $T_\mathcal{A}$\ & 7,864,331 &\ 179,485,656 \\ \hline
Theorem~\ref{thm:one-cluster_bound} &\ 4,041,171 &\ 74,650,059 \\ \hline
Theorem~\ref{thm:pin_rank_bound} &\ 1,804,727 &\ 3,644,756 \\ \hline
Theorem~\ref{thm:pairs_bound} &\ 1,033,590 & 1,372,878\ \\ \hline
Theorem~\ref{thm:pairs_bound} with Theorem~\ref{thm:pin_rank_bound} &\ 916,354  &\ 1,206,910 \\ \hline
All exclusions &\ 778,517 &\ 515,436 \\ \hline
\end{tabular}
\end{table} 

\subsection{Further Exclusions}

Here we present some simple and technical bounds for the reset length. These are not bounds for the reset lengths of extensions, but are still useful to reduce the number of generated automata.

The following proposition can be used for binary automata ($k=2$), where we can exclude all automata whose letters are of particular form.

\begin{proposition}\label{bounds-pro:idid_bound}
For a binary synchronizing automaton $\mathcal{A}$, if both transformations $t$ generated by the letters satisfy ${t^2}=a$ or $t^2$ is the identity transformation, then $\mathcal{A}$ has a reset word of length at most $2n-2$.
\end{proposition}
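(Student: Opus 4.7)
My plan is to analyze a shortest reset word $w = w_1 \cdots w_\ell$ by showing it has a very restricted shape, and then track the image sequence $S_i = Q w_1 \cdots w_i$ to bound $\ell$.

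First, I would observe that since $\mathcal{A}$ is synchronizing, not both letters can induce involutions, because involutive letters act as bijections on $Q$ and a composition of bijections never compresses $Q$. So at least one letter, call it $a$, is idempotent, with image $F_a = Qa$ of size $|F_a| \le n - 1$ on which $a$ acts as the identity. Next, I would argue that $w$ contains no two consecutive equal letters: if $w_i = w_{i+1}$ corresponds to an idempotent transformation $t$, one occurrence can be deleted (since $t^2 = t$), while if it corresponds to an involution, both can be deleted (since $t^2 = \mathrm{id}$). Hence $w$ alternates between $a$ and $b$. Moreover, both $w_1$ and $w_\ell$ must correspond to idempotent letters: an involution at the start or end of $w$ is a bijection, so removing it yields a strictly shorter reset word. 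This constrains $w$ to one of four forms: $a(ba)^k$, $b(ab)^k$, $(ab)^k$, or $(ba)^k$, where whichever letter occupies an endpoint is idempotent.

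For the ``one-sided'' forms such as $w = a(ba)^k$, I would introduce $\sigma \colon F_a \to F_a$ by $\sigma(x) = xba$ (well-defined because $a$ is idempotent, so $Qba \subseteq F_a$), and observe that $S_{2j+1} = F_a \sigma^j$. The nested chain $F_a \supseteq F_a\sigma \supseteq F_a\sigma^2 \supseteq \cdots$ must strictly decrease in size until it stabilizes: if $|F_a\sigma^{j+1}| = |F_a\sigma^j|$, then because $F_a\sigma^{j+1} \subseteq F_a\sigma^j$ the two sets are equal and the chain becomes constant. Reaching a singleton therefore requires at most $|F_a| - 1$ strict decreases, giving $\ell = 2k + 1 \le 2|F_a| - 1 \le 2n - 3$. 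For the ``two-sided'' form $w = (ab)^k$ both letters must be idempotent (both endpoints are), and the same chain yields $S_{2j-1} = F_a\sigma^{j-1}$ and $S_{2j} = F_a\sigma^{j-1}b$; either $|S_{2k-1}| = 1$, forcing $k = |F_a|$ strict decreases and $\ell = 2k \le 2(n-1) = 2n - 2$, or $|S_{2k-1}| \ge 2$ is finally collapsed by $b$, in which case $k - 1 \le |F_a| - 2$ and $\ell \le 2|F_a| - 2 \le 2n - 4$. The form $(ba)^k$ is symmetric.

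The main obstacle is justifying the strict-decrease-until-stabilization property of the chain $F_a \supseteq F_a\sigma \supseteq \cdots$; once that is in place, the remaining case analysis over the four structural forms of $w$ is routine arithmetic and gives the uniform bound $2n - 2$.
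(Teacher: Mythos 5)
Your proof is correct and follows essentially the same route as the paper's: use the idempotent/involution hypothesis to reduce a shortest reset word to an alternating form, then bound its length via the strictly decreasing nested chain of images under iterates of the two-letter block. You in fact supply details the paper's terse argument glosses over (that involutions cannot occupy the endpoints, and exactly why each block must strictly compress the image), and your case analysis even yields the slightly sharper bounds $2n-3$ and $2n-4$ in the individual cases.
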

\begin{proof}
Let $\Sigma = \{a,b\}$. Consider a shortest reset word $w$. It has no two the same consecutive letters, so it is of the form $w=(ab)^m$ or $w=(ab)^m a$. Obviously, Since $w$ is synchronizing, either $a$ or $ab$ collapses $Q$, that is $|Qab| < |Q|$, and the same can be said for any $Q' = Q(ab)^i$. It follows that $m \le n-1$ and $|w| \le 2n-2$.
\qed
\end{proof}

The following is a well-known folklore result, which ensures that we can restrict ourselves to strongly connected automata and still generate all automata with a long reset length.

\begin{proposition}\label{bounds-pro:non_strongly_connected_bound} Let $\mathcal{A}=\langle Q,\Sigma,\delta \rangle$ be a $k$-ary synchronizing automaton of size $n \ge 5$. If the \v{C}ern\'{y} conjecture is true for all $k$-ary automata of size less than $n$, and $\mathcal{A}$ is not strongly connected, then $\mathcal{A}$ has a reset word of length at most $n^2-4n+5$.
\end{proposition}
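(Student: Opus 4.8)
The plan is to exploit the structure of a non-strongly-connected synchronizing automaton: it has a unique terminal strongly connected component (sink component), and everything eventually flows into it. First I would argue that if $\mathcal{A}$ is synchronizing then there is a unique sink SCC, call it $Q_1 \subseteq Q$, with $|Q_1| = n_1 < n$ (if $n_1 = n$ the automaton is strongly connected, contrary to assumption). Indeed, two distinct sink components could never be merged by any word, contradicting synchronizability; and the (unique) synchronizing state must lie in $Q_1$. Moreover the restriction $\mathcal{A}_1 = \langle Q_1, \Sigma, \delta|_{Q_1}\rangle$ is a well-defined $k$-ary synchronizing automaton of size $n_1 \le n-1$, so by hypothesis (the \v{C}ern\'y conjecture holds below $n$) it has a reset word $w_1$ with $|w_1| \le (n_1 - 1)^2 \le (n-2)^2$.

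The second step is to bring all of $Q$ into $Q_1$ cheaply. Since every state of $Q \setminus Q_1$ eventually reaches $Q_1$ under any fixed letter path, and more to the point, a standard argument (the same "extension of a subset by one" reasoning behind the $\binom{n}{2}$-type bounds) gives a word $v$ with $Qv \subseteq Q_1$ of length at most $\binom{n - n_1 + 1}{2}$ or, using the cruder but sufficient bound from reachability in the power automaton restricted to sets meeting $Q\setminus Q_1$, of length at most something like $(n - n_1)(\text{something})$. Concretely I would show $|v| \le \binom{n - n_1 + 1}{2}$: order $Q\setminus Q_1$ and absorb states one at a time, each absorption costing at most $n - n_1 - (i-1)$ letters by the usual Frankl–Pin/extension estimate, giving $\sum_{j=1}^{n-n_1} j = \binom{n-n_1+1}{2}$. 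Then $v w_1$ is a reset word of length at most $\binom{n - n_1 + 1}{2} + (n_1 - 1)^2$.

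The third step is the arithmetic: one must check that $\binom{n - n_1 + 1}{2} + (n_1 - 1)^2 \le n^2 - 4n + 5$ for all $1 \le n_1 \le n - 1$, using $n \ge 5$. Writing $d = n - n_1 \ge 1$, the left side is $\frac{d(d+1)}{2} + (n - d - 1)^2$, a function of $d$ on $\{1, \dots, n-1\}$; I expect it to be maximized at one of the endpoints $d = 1$ or $d = n - 1$. At $d = 1$ we get $1 + (n-2)^2 = n^2 - 4n + 5$, which is exactly the claimed bound — so the inequality is tight and the worst case is a sink component of size $n - 1$. At $d = n-1$ we get $\binom{n}{2} + 0 = \frac{n(n-1)}{2}$, which is at most $n^2 - 4n + 5$ precisely when $n^2 - 7n + 10 \ge 0$, i.e. $n \ge 5$ (roots at $2$ and $5$) — which is exactly why the hypothesis $n \ge 5$ appears. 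For interior $d$ one checks the quadratic in $d$ stays below the endpoint values; this is a routine convexity/second-difference check.

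The main obstacle is getting the absorption bound $\binom{n-n_1+1}{2}$ honest rather than hand-wavy: one needs that at each stage, given a set $M$ with $M \cap (Q\setminus Q_1) \ne \emptyset$ and $|M \setminus Q_1| = i$, there is a word of length at most $n - n_1 - i + 1$ strictly decreasing $|M \setminus Q_1|$ (not necessarily $|M|$). This follows because the "image in $Q \setminus Q_1$" is a transient part: the subautomaton induced on the transient states, with $Q_1$ collapsed to a single absorbing state, is acyclic-ish enough that a shortest word pulling one transient state into $Q_1$ has length bounded by the number of transient states not yet absorbed — essentially a reachability-depth argument in the condensation DAG. I would make this precise by an induction on the number of SCCs outside $Q_1$, or simply cite that a state at "distance $t$" from $Q_1$ in the SCC-condensation needs a word no longer than the number of transient SCCs to enter $Q_1$, and sum. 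If one wants to avoid even this, the slack at $d = n-1$ (for $n > 5$) and the tightness at $d = 1$ suggest the cleanest writeup handles $n_1 = n-1$ directly (word of length $(n-2)^2$ inside $Q_1$ plus one letter to pull in the single outside state) and then notes any smaller $n_1$ only helps.
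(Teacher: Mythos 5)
Your proof is correct and essentially matches the paper's: identify the sink component, absorb the $d=n-n_1$ transient states with a word of length at most $d(d+1)/2$, reset the sink by the \v{C}ern\'y hypothesis in at most $(n-d-1)^2$ letters, and maximize the convex sum over $1\le d\le n-1$, which peaks at $d=1$ giving $n^2-4n+5$ (while $d=n-1$ gives $\binom{n}{2}$, which is where $n\ge 5$ enters). The paper simply asserts the absorption bound $1+2+\cdots+d$ without justification, so your per-step depth estimate (with $i$ transient states remaining, some state has depth at most $d-i+1$ to the sink) is a correct and welcome filling-in of that gap.
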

\begin{proof}
Let $X \subsetneq Q$ be the sink component of $\mathcal{A}$. There is a word $w$ which maps every state from $Q \setminus X$ to $X$ such that $|w| \le 1+2+\dots+|X| = |X|(|X|+1)/2$. Let $v$ be the shortest reset word synchronizing $X$, $|v| \le (n-1-|X|)^2$. Then $wv$ resets $\mathcal{A}$ and
$$|wv| \le \frac{|X|(|X|+1)}{2}+(n-1-|X|)^2 = n^2 - 2n|X| - 2n + \frac{3}{2}|X|^2 + 5|X| + 1.$$
For $1 \le |X| \le n-1$ and $n \ge 5$ the expression yields the maximum $n^2-4n+5$ for $|X|=1$.
\qed
\end{proof}

We note that in our algorithm we are able to utilize those special results on the \v{C}ern\'y conjecture that refer to a part of the automaton. The results referring to the structure properties not preserved by restrictions of the alphabet are difficult to use. This concerns, in particular, the general result of Grech and Kisielewicz \cite{GK2013AutomataRespectingIntervals}. Nevertheless, we are able to apply successfully a very special case of \cite{GK2013AutomataRespectingIntervals}, which we describe now.

A pair of states $x,y \in Q$ is called \emph{twin}\index{twin pair} in $\mathcal{A}=\langle Q,\Sigma,\delta \rangle$, if $xa=ya$ or $\{xa,ya\} \subseteq \{x,y\}$ for each letter $a\in \Sigma$. In such a case the equivalence relation whose the only nontrivial block is $\{x,y\}$ is a congruence, and the factor automaton is synchronizing if $\mathcal{A}$ is synchronizing. Formally we define 
$\mathcal{A}'=\langle Q',\Sigma,\delta' \rangle$ to be the automaton obtained by identifying the twin states $x$ and $y$ as follows: $Q' = (Q \cup \{z\}) \setminus \{x,y\}$, and
$$\delta'(s,a)=\begin{cases}
z, \text{ if $s \neq z$ and $\delta(s,a) \in \{x,y\}$,} \\
\delta(s,a), \text{ if $s \neq z$ and $\delta(s,a) \in Q \setminus \{x,y\}$,} \\
\delta(x,a), \text{ if $s = z$.}
\end{cases}$$
Then we have
\begin{proposition}\label{bounds-pro:twin_pair_bound}
Let $\mathcal{A}$ be a strongly connected synchronizing automaton. If $\mathcal{A}$  has two twin states $x,y$, and a reset word of length $r$, then the factor automaton $\mathcal{A}'$ is strongly connected, synchronizing, and has a shortest reset word of length $r-1$ or $r$. 
\end{proposition}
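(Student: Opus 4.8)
The plan is to analyze how a reset word of $\mathcal{A}$ descends to the factor automaton $\mathcal{A}'$, and how a reset word of $\mathcal{A}'$ lifts back, using the fact that the map $\pi\colon Q \to Q'$ sending $x,y \mapsto z$ and fixing all other states is a homomorphism of automata (this is exactly what the twin condition guarantees: for every letter $a$, $\delta'(\pi(s),a) = \pi(\delta(s,a))$). First I would check strong connectivity of $\mathcal{A}'$: since $\pi$ is a surjective automaton homomorphism and $\mathcal{A}$ is strongly connected, for any $q',r' \in Q'$ pick preimages $q,r$ and a word $w$ with $qw = r$ in $\mathcal{A}$; then $q'w = r'$ in $\mathcal{A}'$, so $\mathcal{A}'$ is strongly connected. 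Synchronizing is equally immediate: if $w$ resets $Q$ to a single state $s$ in $\mathcal{A}$, then $Q'w = \pi(Q)w = \pi(Qw) = \{\pi(s)\}$, so $w$ resets $\mathcal{A}'$. This already shows the shortest reset word of $\mathcal{A}'$ has length at most $r$, establishing the easy half of the length bound.

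Next I would prove the reset length of $\mathcal{A}'$ is at least $r-1$. Suppose $v$ is a shortest reset word for $\mathcal{A}'$, say $Q'v = \{t'\}$. Lift the target: choose any state $t \in Q$ with $\pi(t) = t'$ (if $t' = z$, take $t = x$, say). Then for every $q \in Q$ we have $\pi(qv) = \pi(q)v = t'$, so $qv \in \pi^{-1}(t')$. If $t' \neq z$, then $\pi^{-1}(t') = \{t'\}$ is a singleton and $v$ already resets $\mathcal{A}$, giving $r \le |v|$, even better than claimed. The interesting case is $t' = z$: then $\pi^{-1}(z) = \{x,y\}$, so $Qv \subseteq \{x,y\}$, i.e. $|Qv| \le 2$. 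Now I need one more letter to collapse $\{x,y\}$ to a point. Here I would invoke the twin condition once more: by definition, for each letter $a$ either $xa = ya$, or $\{xa,ya\} \subseteq \{x,y\}$. Since $\mathcal{A}$ is synchronizing, some letter (indeed some word) must eventually collapse the pair $\{x,y\}$; but as long as no single letter does so, every letter permutes $\{x,y\}$, and a product of such letters still maps $\{x,y\}$ bijectively onto $\{x,y\}$ — so it can never collapse the pair. Hence there must exist a letter $c \in \Sigma$ with $xc = yc$. Then $vc$ resets $\mathcal{A}$ (as $Qvc \subseteq \{x,y\}c = \{xc\}$), so $r \le |v| + 1$, i.e. $|v| \ge r - 1$.

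Combining the two directions gives $r - 1 \le |v| \le r$, which is the claim. The main point requiring care — and the only place the twin hypothesis does real work beyond being a congruence — is the existence of a letter $c$ with $xc = yc$ in the case $t' = z$; I would want to state cleanly that in a synchronizing automaton a twin pair cannot be collapsed purely by letters that permute it, hence some letter identifies the two states outright, and then the lift $v \mapsto vc$ costs exactly one extra symbol. Everything else (the homomorphism property, surjectivity preserving strong connectivity and synchronization) is routine diagram-chasing.
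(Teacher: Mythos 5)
Your proposal is correct and follows essentially the same approach as the paper's proof: both establish strong connectivity and synchronization by pushing words through the quotient map, and both obtain the lower bound $|v| \ge r-1$ by observing that if the shortest reset word $v$ of $\mathcal{A}'$ sends everything to $z$ (so $Qv=\{x,y\}$), then some single letter $c$ must satisfy $xc=yc$ (otherwise every letter permutes $\{x,y\}$ and $\mathcal{A}$ could not be synchronizing), whence $vc$ resets $\mathcal{A}$. You spell out the case split on the target state and the homomorphism property more explicitly, but the argument is the same.
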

\begin{proof}
Any state $p \in Q$ is reachable from $x$, and $x$ is reachable from $p$. Hence the same holds for $p \in Q'$ and $z$, and so, $\mathcal{A}'$ is strongly connected. Obviously, a word synchronizing $\mathcal{A}$ also synchronizes $\mathcal{A}'$.
Let $w$ be a shortest reset word for $\mathcal{A}'$. If $w$ is not a reset word for $\mathcal{A}$, then $Qw=\{x,y\}$. Since $\mathcal{A}$ is synchronizing there exists $a \in \Sigma$ such that $xa = ya$, as otherwise $\{xa,ya\} = \{x,y\}$ for all $a \in \Sigma$. So $wa$ is a reset word for $\mathcal{A}$ of length $|w|+1$.
\qed
\end{proof}

Thus we can skip the automata with a twin pair (which is easily recognizable).
Similarly, as in Proposition~\ref{bounds-pro:non_strongly_connected_bound}, one can observe that if $\mathcal{A}$ is an automaton of size $n$ with a twin pair, and the \v{C}ern\'{y} conjecture is true for all automata of size less than $n$, then $\mathcal{A}$ has reset length at most $n^2-4n+5$.

\section{Results}

We have verified that the \v{C}ern\'{y} conjecture is true for automata with $n \le 5$ states, regardless of the alphabet size. For $n=6$ we checked automata up to $k=5$ letters, for $n=7$ up to $4$ letters, and for $n=8$ up to $3$ letters. Our computation confirms, in particular, all results stated in~\cite{Tr2006Trends}. It is especially important, since it remains unclear whether the results announced in~\cite{Tr2006Trends} are based on partial computation or on unknown ideas. Anyway, no clue is given how and on what base the generation of automata is restricted, while it is apparent that it is impossible to perform this computation with brute force approach. For $n=8$ and $k=3$ we checked 20,933,723,139 automata. Without using parallelism, the computation would take $1.25$ years of one CPU core.
Compare this with the number 572,879,126,392,178,688 of ICDFA automata that one would need to generate applying the technique described in~\cite{AGV2010}.
The algorithm is useful also for verifying many other claims regarding synchronization and discovering new automata with special properties.

\bigskip
\textbf{Acknowledgments.} The computations were performed on a grid that belongs to Institute of Computer Science of Jagiellonian University. We thank Jakub Kowalski for a help in parallelization of our computations, and Jakub Tarnawski for computing the values of $D(m,k)$ for small $m,k$.

\bibliographystyle{plain}

\end{document}